\newcommand{\Cov} {\mathbb{V}\,}
\newenvironment{ps}
  {\left[\begin{smallmatrix}}
  {\end{smallmatrix}\right]}
\def \gr{\gamma_r}
\newcommand{\Var} {\mathbb{V}\,}
\newcommand{\R} {\mathbb{R}}
\newcommand{\Z} {\mathbb{N}}
\newcommand{\N} {\mathbb{N}}
\newcommand{\sigmau} {u}
  \newtheorem{assumption}{Assumption}
      \newtheorem{proposition}{Proposition}
      \newtheorem{corollary}{Corollary}
  \newenvironment{example}
  {\pushQED{\qed}\examplex}
  {\popQED\endexamplex}
  \newenvironment{remark}
  {\pushQED{\qed}\remarkx}
  {\popQED\endremarkx}
    \newtheorem{definition}{Definition}
  \newenvironment{theorem}
  {\pushQED{\qed}\theoremx}
  {\popQED\endtheoremx}
\newcommand {\be}{\begin{equation}}
\newcommand {\ee}{\end{equation}}
\newcommand {\bes}{\begin{equation*}}
\newcommand {\ees}{\end{equation*}}
\newcommand {\bev}{\begin{equation}}
\newcommand {\eev}{\end{equation}}
\newcommand{\intercept}{v}
\newcommand{\Rc}{Q}
\begin{document}

\title{Computing the projected reachable set of switched affine systems: an application to systems biology  }

\author{Francesca Parise, Maria Elena Valcher and John Lygeros
\thanks{F. Parise is with the Laboratory for Information and Decision Systems, MIT, Cambridge, MA: {\tt\small parisef@mit.edu},
 M.E. Valcher is with  the Department of Information Engineering, University of Padova, Italy:
        {\tt\small meme@dei.unipd.it} and J. Lygeros is with the Automatic Control Laboratory, ETH, Zurich,  Switzerland:  {\tt\small lygeros@control.ee.ethz.ch}. We thank M. Khammash for allowing us to perform the experiments in the CTSB Laboratory, ETH, and J. Ruess and A.M. Argeitis for their help in
collecting the data of Fig. 5a). This work was supported by the SNSF grant number P2EZP2 168812. }
         
}
\maketitle

\begin{abstract}
A fundamental question in systems biology   is what combinations of mean and variance of the species present in a stochastic biochemical reaction network are attainable by perturbing the system~with an external signal. To address this question,  
we show that the moments evolution in any generic network can be either approximated or, under suitable assumptions, computed exactly as the solution of a switched affine system. Motivated by this application, we  propose a new method to  approximate the reachable set of  switched affine  systems. 
A remarkable feature of our approach is that it allows one to easily compute projections of the reachable set for pairs of moments of interest, without requiring the computation of the full reachable set, which can be prohibitive for large networks. As a second contribution, we also show how to select the external signal in order to maximize the probability of reaching a target set.
To illustrate  the  method we  study a renown  model of controlled gene expression and we derive estimates  of the reachable set, for the protein mean and variance, that are more accurate than those available in the literature and  consistent with experimental data.
\end{abstract}

\section{Introduction}

One of the most impressive results achieved by synthetic biology in the last decade is the introduction of externally controllable modules in biochemical reaction networks. These are biochemical circuits that react to external signals, as for example light pulses \cite{milias2011silico, parise2015guides, Olson2014a} or concentration signals \cite{Batt2012, menolascina2011analysis}, allowing researchers to influence and possibly control the behavior of  cells \textit{in vivo}.  To fully exploit these tools, it is  important  to first understand what range of behaviors they can exhibit under different choices of the external signal.
For deterministic systems, this  amounts to computing the set of states that can be reached by the controlled system~trajectories starting from a known initial configuration   \cite{batt2008symbolic, chabrier2003symbolic}. Since chemical species are often present in low copy numbers inside the cell, biochemical reaction networks can
 however be
 inherently stochastic \cite{gillespie1992rigorous}. 
 In other words, if we apply the same signal to a population of identical cells, then every cell will have a different evolution (with different likelihood), requiring  a probabilistic analysis.
 
  If we interpret each cell has an independent realization, we can then study the effect of the external signal on  a population of cells  by characterizing how such a signal influences the moments of the underlying stochastic process. Specifically, 
in this paper we  pose the following  question: 
\begin{center}
\textit{``What combinations of  moments of the stochastic process can be achieved by applying the external signal?''}
\end{center}
 
 This approach is motivated for example by biotechnology applications, where one would like to control the average behavior of the cells in large populations, instead of each cell individually. More on the theoretical side, this  perspective can be useful to investigate fundamental questions on noise suppression in biochemical reaction networks, as  in \cite{lestas2010fundamental}. 

The cornerstone of our approach is the observation that while the number of copies in each cell is stochastic, the evolution of the moments is deterministic and can either  be described or approximated by a switched affine 
system.  
Consequently, the above question can be reformulated as a reachability problem in the moment space.
Computing the exact reachable set of a switched affine  
system~is in general far from trivial, see \cite{sun2006switched,altafini2002reachable}.  We thus start our analysis  by proposing a new method to approximate the reachable  set of a switched affine system. This  is an extension of the hyperplane method for linear systems suggested in \cite{krogh} and is of interest on its own. We then show how to apply the proposed approach  to  biochemical reaction networks by distinguishing two cases:
\begin{enumerate}
\item If all the reactions follow the laws of mass action kinetics and are at most of order one, the system~of moments equations is switched affine. Consequently, for this class of networks, the above question can be solved by directly  applying the newly suggested hyperplane method in the moments space;
\item For all other reaction networks the moments equations are in general non-closed (i.e., the evolution of mean and variance depends on higher order moments). We show however that the evolution of the probability of being in a given state can be described by an \textit{infinite} dimensional switched system~and that the desired moments can be computed as the output of such  system. We then show: i)  How to approximate such  an infinite dimensional system~with a finite dimensional one, by extending the finite state projection  method \cite{fsp} to controllable networks, ii) How to compute the reachable set of the finite dimensional system~by applying the newly suggested hyperplane method in the probability space, and iii) How to recover an approximation of the original reachable set from the reachable set of the finite dimensional system.   
\end{enumerate}

In the last part of the paper, we change perspective and, instead of focusing on population properties,  we consider  the behaviour of a single cell (i.e., a single realization of the process), given a fixed initial condition or an initial probability distribution. Such perspective has been commonly employed for the case without external signals, see e.g. \cite{baier2003model, abate2010approximate, el2006advanced, fsp}. Our objective is to  show how the external signal can be used to control single cell realizations by posing the following question
\begin{center}
\textit{``What external signal should be applied to maximize the probability that the  cell trajectory reaches a prespecified subset of the state space at the end of the experiment?''}
\end{center}
We  show that such a problem can be  addressed by using similar tools as those derived for the population analysis.

\subsubsection*{Comparison with the literature}
 A vast literature has been devoted to the analysis of the reachable set of piecewise-affine systems in the context of hybrid systems, see e.g. \cite{sun2006switched,alur1993hybrid, alur2000discrete, koutsoukos2003safety, ghosh2004symbolic, habets2006reachability, hamadeh2008reachability} among many. 
Our results are different because we exploit the specific structure of the problem at hand, that is,  the fact that the switching signal is a control variable and that the  dynamics in each mode  are autonomous and affine.  
In other words, we consider switched affine systems for which the switching signal is the only control action. 
We  also note  that many different methods have been proposed in the literature to compute the reachable set of  generic nonlinear systems. Among these there are level set methods \cite{mitchell2005toolbox},   ellipsoidal methods \cite{kurzhanskiui1997ellipsoidal} and sensitivity based methods \cite{donze2007systematic}. For example, we became aware at the time of submission that the authors of \cite{lakatos2016control} extended our previous works \cite{parise2014reachable,parise2015reachable} by suggesting the use of ellipsoidal methods. It is  important to stress that
the choice of a method that scales well with the system~ size  is essential in our context, since
biochemical networks are typically very large.  Moreover,  biologists are often interested in analyzing the behavior of only  a few chemical species of the possibly many involved in the network. Consequently, one is  usually interested in computing the projection of the reachable set (which is  a high-dimensional object) on some low-dimensional space of interest. The hyperplane method that we propose stands out in this  respect since, by using a method tailored for switched systems,  it allows one to compute directly the  projections of the reachable set, without requiring the computation of the full high-dimensional reachable set first. We thus avoide the curse of dimensionality that characterises all the previously mentioned methods.
We  note that part of the results of this paper appeared in our previous works \cite{parise2015reachable,parise2016reachability}. Specifically, 
in \cite{parise2015reachable} we first suggest the use of the hyperplane method  to compute the reachable set of biochemical networks with \textit{linear} moment equations, which we then adapted in  \cite{parise2015reachable}  to the case of \textit{switched affine} moment equations. As better detailed in Section \ref{sec:th:mom}, the assumptions made both in \cite{parise2015reachable} and \cite{parise2016reachability} do not allow for bimolecular reactions, which are instead present in the vast majority of biochemical networks. The key contribution of this paper is the generalisation of our  analysis to \textit{any} biochemical network by using the approach described in point 2) above.  The analysis of single cell realizations is also entirely new.

\subsubsection*{Outline}
In Section~\ref{sec:r_tol} we present  the hyperplane method. In Section~\ref{sec:r_tol_af} we review how to compute the hyperplane constants for linear systems, while  in Section~\ref{sec:r_tol_sw} we propose a new procedure for switched affine systems. In Section~\ref{CME} we introduce stochastic biochemical reaction networks and the controlled chemical master equation (CME). Additionally, we recap how to derive the moments equations from the CME (Section \ref{sec:th:mom}) and we derive an extension of the finite state projection  method to controlled biochemical networks (Section \ref{sec:th:fsp}). In Section~\ref{sec:reach} we show how to compute the reachable set of biochemical networks 
and in Section~\ref{sec:individual} we derive the results on single cell realizations. 
Section~\ref{gene} illustrates our theoretical results on a gene expression case study.

\subsubsection*{Notation}
Given $a<b\in\mathbb{N}$, we  set $\mathbb{N}[a,b]:=\{a,a+1,\ldots,b\}$.
Given a set $\mathcal{S}$, the symbol $\partial \mathcal{S}$ denotes its boundary, $\textup{conv}(\mathcal{S})$ its convex hull and $|\mathcal{S}|$ its cardinality.  For a vector $x\in\mathbb{R}^n$, $x_p:=\left[x\right]_p$ denotes its $p$th component,  $|x|:=[|x_1|^\top,\ldots,|x_n|^\top]^\top$ and $\|x\|_\infty:=\max_{p=1,2,\dots, n}|x_p|$ denotes the infinity norm.   $\mathbbm{1}$ denotes a vector of all ones. Given two random variables $Z_1,Z_2$, we denote  by $\mathbb{V}[Z_1]$  and $\mathbb{V}[Z_1,Z_2]$ their variance and covariance, respectively.

\section{Reachability tools}
\label{sec:r_tol}

\subsection{The reachable set and the hyperplane method}
\label{sec:r_tol_def}
Consider  the $n$-dimensional nonlinear control system
\begin{align}
\dot x(t)=f(x(t) ,\sigma(t)), \quad t\ge 0,
\label{eq:nnl}
\end{align}
 where $x$ is the $n$-dimensional state and $\sigma$ the $m$-dimensional  input function.  Set a final time $T>0$ and let ${\mathcal S}$ be  {\em the set of admissible input functions } that we assume to be a subset of the set of all measurable functions that map $[0,T]$  into $\R^m$.  We  assume that  the function $f:\mathbb{R}^n\times \mathbb{R}^m\rightarrow \mathbb{R}^n$ is such that, for every initial condition $x(0)\in \mathbb{R}^n$ and   every  input function  $\sigma\in\mathcal{S}$, the solution of  \eqref{eq:nnl}, denoted by  $x(t;  x(0), \sigma),  t\ge0,$ is well defined and unique at every time $t\ge 0$. 
The reachable set of system~\eqref{eq:nnl} at time $T$ is defined as the set of all states $x\in\mathbb{R}^n$ that   can be reached at time $T$, starting from $x(0)$, by using an admissible  input function $\sigma\in {\mathcal S}$.

\begin{definition}[{Reachable set at time $T$}] The  reachable set   
at time $T>0$  from $x(0)=x_0$,   for  system~\eqref{eq:nnl} with  admissible input set ${\mathcal S}$, is
\begin{equation}  \mathcal{R}_T(x_0):=\{x\in\mathbb{R}^n \mid \exists\ \sigma\in  {\mathcal S} : x=x(T; x_0, \sigma)
\}.
 \end{equation}
\end{definition}
From now on we   will assume that the set $\mathcal{R}_T(x_0)$ is compact, since this will be the case for all the systems of interest analysed in the following.
Computing such   a reachable set for  nonlinear systems is in general a very difficult task.
For the case of linear systems with bounded inputs  a method to construct an outer approximation of $\mathcal{R}_T(x_0)$ as  the intersection of a family of half-spaces that are tangent to its boundary (see Fig.~\ref{fig:hm}) was  proposed in~\cite{krogh}. 

\begin{figure}[h]
\begin{center}
\includegraphics[width=0.35\textwidth]{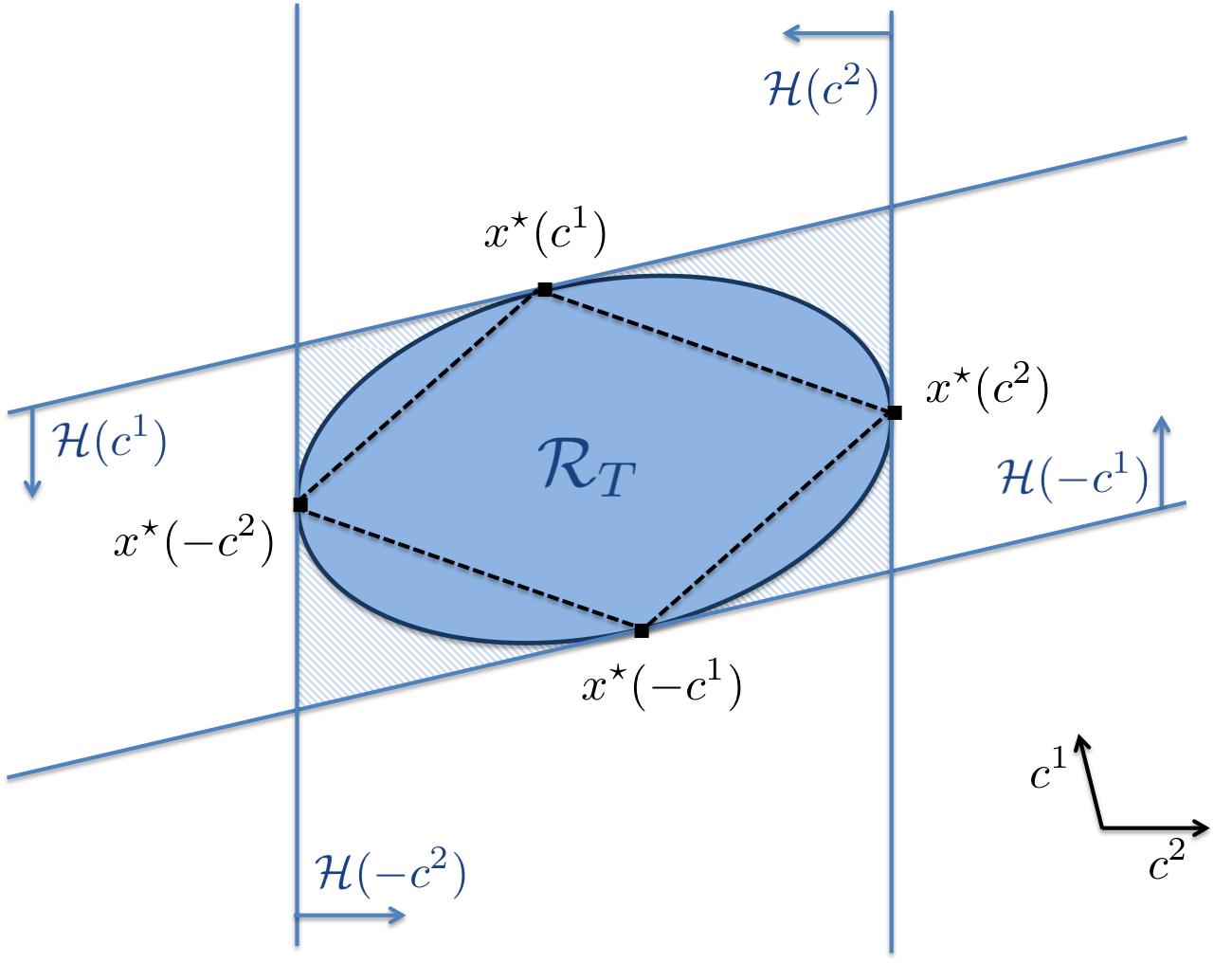}
\end{center}
\vspace{-0.5cm}
\caption{Illustration of the hyperplane method for a  \textit{convex} reachable set  $\mathcal{R}_T(x_0)$ (in blue). The external parallelogram is the outer approximation, the region in between the dotted lines is the inner approximation.  }
\label{fig:hm}
\end{figure}

We present here a generalisation  of this method to system~\eqref{eq:nnl}.
For a given direction $c\in\mathbb{R}^n$, let us define
\begin{align}\label{eq:intercept_general}
\intercept_T(c):=\max_{x\in \mathcal{R}_{T}(x_0)} c^\top x,
\end{align}
where, for simplicity, we omitted the dependence of $\intercept_T(c)$ on the initial condition $x_0$.
Let 
\begin{align}\label{eq:half_space}
\textup{H}_T(c)&:=\{ x\in\mathbb{R}^n \mid {c}^\top x= \intercept_T(c) \}
\end{align}
 be the corresponding  hyperplane. By definition of the constant $\intercept_T(c)$,  the associated half-space 
\begin{align}\label{eq:half_space}
\mathcal{H}_T(c)&:=\{ x\in\mathbb{R}^n \mid {c}^\top x\le \intercept_T(c) \}
\end{align}
is a superset of  $\mathcal{R}_T(x_0).$ We note that if $\partial \mathcal{R}_T(x_0)$ is smooth, then $\textup{H}_T(c)$ is  the tangent plane to $\partial \mathcal{R}_T(x_0)$. 
By   evaluating the above hyperplanes and half-spaces
for various  directions,
one can construct an outer approximation of the   reachable set, as illustrated in the next theorem. If the reachable set is convex then an inner approximation can also be derived.

\begin{theorem}[The hyperplane method \cite{krogh}] 
\label{thm:hm1}
Given system~\eqref{eq:nnl},  an initial condition $x_0\in {\mathbb R}^n$, a fixed time $T>0$,  an integer number  $D\ge2$, and a set of  $D$ directions $\mathcal{C}:=\{c^1,\ldots,c^D\}$, define the half-spaces 
$
\mathcal{H}_T(c^d)$
as in \eqref{eq:half_space}, for   $d=1,\ldots,D$. 
 \begin{enumerate}
\item The  set
\begin{align*}
\mathcal{R}_T^{out}(x_0):=\cap_{d=1}^D  \mathcal{H}_T(c^d) \quad
\end{align*}
is an outer  approximation of the reachable set $\mathcal{R}_T(x_0)$ at time $T$  starting from $x_0$.

\item If the set $\mathcal{R}_T(x_0)$ is convex and for each $d= 1, 2, \dots, D,$ we select a (tangent) point 
\begin{align} \label{x}
x_T^{\star}(c^d)&\in \mathcal{R}_T(x_0) \cap \textup{H}_T(c^d) 
\end{align} 
then the set 
\begin{align*}
\mathcal{R}_T^{in}(x_0):=\textup{conv} \left(\{x_T^{\star}(c^d), d= 1,2, \dots, D\}\right)
\end{align*}
is an inner approximation of the reachable set $\mathcal{R}_T(x_0)$ at time $T$  starting from $x_0$.  \qedhere
\end{enumerate}
\end{theorem}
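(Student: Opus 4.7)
The plan is to prove the two inclusions directly from the definitions, with Part 1 relying only on the definition of the support intercept $\intercept_T(c)$ and Part 2 relying on convexity together with the fact that the tangent points are chosen to lie in $\mathcal{R}_T(x_0)$.

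For Part 1, I would argue as follows. Pick any $x \in \mathcal{R}_T(x_0)$. By the definition \eqref{eq:intercept_general} of $\intercept_T(c^d)$ as the maximum of $(c^d)^\top x$ over $\mathcal{R}_T(x_0)$, we immediately have $(c^d)^\top x \le \intercept_T(c^d)$ for every $d \in \{1,\dots,D\}$. Hence $x \in \mathcal{H}_T(c^d)$ for every $d$, so $x \in \cap_{d=1}^D \mathcal{H}_T(c^d) = \mathcal{R}_T^{out}(x_0)$. Compactness of $\mathcal{R}_T(x_0)$, assumed earlier, guarantees that the maxima defining $\intercept_T(c^d)$ are attained and finite, so $\mathcal{H}_T(c^d)$ is well defined. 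This yields $\mathcal{R}_T(x_0) \subseteq \mathcal{R}_T^{out}(x_0)$.

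For Part 2, I would observe that by \eqref{x} each tangent point $x_T^{\star}(c^d)$ belongs to $\mathcal{R}_T(x_0)$. Since $\mathcal{R}_T(x_0)$ is assumed convex, it contains the convex hull of any of its finite subsets. Therefore
\begin{equation*}
\mathcal{R}_T^{in}(x_0) = \textup{conv}\bigl(\{x_T^{\star}(c^d),\, d=1,\dots,D\}\bigr) \subseteq \mathcal{R}_T(x_0),
\end{equation*}
which is exactly the inner approximation claim.

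There is no genuine obstacle: both statements reduce to one-line applications of the support-function definition and of convexity, respectively. The only subtle point worth flagging in the proof is the existence of the tangent points in \eqref{x}, which is ensured again by compactness of $\mathcal{R}_T(x_0)$: the continuous linear functional $x \mapsto (c^d)^\top x$ attains its maximum on $\mathcal{R}_T(x_0)$, so $\mathcal{R}_T(x_0) \cap \textup{H}_T(c^d)$ is non-empty and a valid choice of $x_T^{\star}(c^d)$ exists. With that remark, the proof is essentially two displayed inclusions.
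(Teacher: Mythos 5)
Your proof is correct: Part 1 is exactly the definition of $\intercept_T(c^d)$ as the support value of $\mathcal{R}_T(x_0)$ in direction $c^d$, and Part 2 follows from convexity of $\mathcal{R}_T(x_0)$ together with the fact that each $x_T^{\star}(c^d)$ lies in it; your remark that compactness guarantees both attainment of the maxima and non-emptiness of $\mathcal{R}_T(x_0)\cap \textup{H}_T(c^d)$ is the right point to flag. The paper itself gives no proof of this theorem (it is attributed to the cited hyperplane-method reference), and your argument is the standard direct one, so there is nothing further to reconcile.
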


\begin{remark} 
We note that by construction the outer approximation $\mathcal{R}_T^{out}(x_0)$ is a convex object.  Specifically, when the number of hyperplanes tends to infinity $\mathcal{R}_T^{out}(x_0)$ coincides with the convex hull of $\mathcal{R}_T(x_0)$. Similarly, for \textup{any}  set $\mathcal{R}_T(x_0)$, the set $\mathcal{R}_T^{in}(x_0)$ is an inner approximation of the convex hull of $\mathcal{R}_T(x_0)$. However, the inner approximation of the convex hull of a set is an inner approximation of the set itself only if such set is convex, as assumed in the previous theorem.
\end{remark}

The main advantage of this method is that hyperplanes are very easy objects to handle and visualise. The main disadvantage is that 
the higher the dimension $n$ of the state space,   the higher    in general is  the number of  directions $D$ required to obtain a  good characterisation of the reachable set. In the next subsection we  show how to avoid this curse of dimensionality,   in cases when  only  the projection of the reachable set on a plane of interest is needed. 

\subsection{The output reachable set}

  Let the output of system~\eqref{eq:nnl} be
\begin{equation}
y(t)=Lx(t),
\label{outputA}
\end{equation}
for  $L\in\R^{p\times n}$,
and the output reachable set be  the set of all output values   that can be generated at time $T$ from $x(0) =x_0$, by using an admissible input function
$\sigma\in {\mathcal S}$.
\smallskip

\begin{definition}[{Output reachable set at time $T$}] The output reachable set $\mathcal{R}_{T}^y(x_0)$ {\em from $x_0$ at time} $T>0$,   for system~ \eqref{eq:nnl} with  admissible input set $ {\mathcal S}$ and output    as in \eqref{outputA}, is 
\begin{align*}  
\hspace{0.7cm} 
\mathcal{R}_{T}^y(x_0):=\{y\in\mathbb{R}^p \mid \exists\ x \in \mathcal{R}_T(x_0) :  y=Lx
\}. \qedhere
 \end{align*}
\end{definition}

For simplicity, in the following we restrict our discussion to the case of a two-dimentional output vector, that is 
\begin{equation}
y(t)=Lx(t)=\begin{bmatrix}l_1^\top x(t)\cr  l_2^\top x(t)\end{bmatrix} \in\mathbb{R}^2,
\label{output}
\end{equation}
for some $l_1,l_2\in\R^n$, the generalization to higher dimentions is however immediate. Note that,  for any pair of  indices $i,j \in \{1,\ldots,n\}, i\neq j$,  one can  recover the projection of the reachable set $\mathcal{R}_T(x_0)$   onto   an $(x_i,x_j)$-plane of interest by imposing  $l_1=e_i$ and $l_2=e_j$. The two-dimentional output vector case can therefore be  applied   to study the relation between the mean behavior of two species or between mean and variance of a single species in large biochemical networks.

In the following theorem  we show that inner and outer approximations of  $\mathcal{R}_{T}^y(x_0)$ can be efficiently computed by selecting only hyperplanes that are perpendicular to the plane of interest. 

\begin{theorem}[Projection on a two dimensional subspace] 
\label{thm:hm2}
Consider system~\eqref{eq:nnl}, with output \eqref{output}    and initial condition $x_0\in {\mathbb R}^n$. Let  $T>0$ be a fixed time, $D\ge 2$  an integer number and choose $D$ values $\gamma^d\in\mathbb{R}$. Set  $c^d:=l_2-\gamma^d l_1 \in\mathbb{R}^n$ and 
\begin{align*}
\mathcal{H}^{y}_{T}(\gamma^d)&:=\{ y \in\mathbb{R}^2 \mid y_2 \le \gamma^d y_1+ \intercept_T(c^d) \},
\end{align*}
where $\intercept_T(c^d)$ is   as in \eqref{eq:intercept_general}. Set
$
y_{T}^{\star}(\gamma^d)\textstyle :=L x_{T}^{\star}(c^d),
$
where $x_{T}^{\star}(c^d)$ is defined  as in \eqref{x}.
Then the set 
\begin{align}
\mathcal{R}_{T}^{y,out}(x_0):=\cap_{d=1}^D  \mathcal{H}_{T}^{y}(\gamma^d) \label{out}
\end{align}
is an outer approximation of  $\mathcal{R}_{T}^y(x_0)$. Moreover, if  $\mathcal{R}_{T}(x_0)$ is convex then the set
\begin{align}
\mathcal{R}_{T}^{y,in}(x_0):=\textup{conv} \left(\{y_T^{\star}(\gamma^d), d= 1,2, \dots, D\}\right)
\end{align}
is an inner approximation of $\mathcal{R}_{T}^y(x_0)$. 
 \end{theorem}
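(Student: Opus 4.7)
My plan is to show both claims by reducing to Theorem~\ref{thm:hm1} via the adjoint relation $(l_2 - \gamma^d l_1)^\top x = y_2 - \gamma^d y_1$ whenever $y = Lx$. The key observation is that the chosen directions $c^d = l_2 - \gamma^d l_1$ in the state space translate, under the output map $L$, precisely into lines of slope $\gamma^d$ in the output plane, and it is these lines that should bound $\mathcal{R}_T^y(x_0)$ from above.

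For the outer inclusion, I would pick an arbitrary $y \in \mathcal{R}_T^y(x_0)$, lift it back to some $x \in \mathcal{R}_T(x_0)$ with $y = Lx$, and compute
\begin{equation*}
y_2 - \gamma^d y_1 = l_2^\top x - \gamma^d l_1^\top x = (c^d)^\top x \le \intercept_T(c^d),
\end{equation*}
where the inequality is just the definition of $\intercept_T(c^d)$ in \eqref{eq:intercept_general}. Rearranging gives $y \in \mathcal{H}_T^y(\gamma^d)$; since $d$ was arbitrary, $y$ lies in the intersection \eqref{out}. This step is immediate and does not require convexity of $\mathcal{R}_T(x_0)$.

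For the inner inclusion, I would first argue that when $\mathcal{R}_T(x_0)$ is convex, its linear image $L(\mathcal{R}_T(x_0)) = \mathcal{R}_T^y(x_0)$ is also convex. By the definition of $x_T^\star(c^d)$ in \eqref{x}, each $x_T^\star(c^d)\in\mathcal{R}_T(x_0)$, hence $y_T^\star(\gamma^d) = L x_T^\star(c^d) \in \mathcal{R}_T^y(x_0)$. Convexity of $\mathcal{R}_T^y(x_0)$ then gives $\textup{conv}\{y_T^\star(\gamma^d)\}_{d=1}^D \subseteq \mathcal{R}_T^y(x_0)$, which is exactly the claim.

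I do not expect any serious obstacle: the whole statement is essentially the observation that the hyperplane method of Theorem~\ref{thm:hm1} can be applied ``in the output space'' by restricting the directions $c$ to the two-dimensional family $c = l_2 - \gamma l_1$, plus the standard fact that linear maps preserve convexity. The only subtle point worth flagging is that the outer half-spaces in the theorem are one-sided (only upper-bounding $y_2 - \gamma y_1$); a nontrivial bounded outer approximation in $\mathbb{R}^2$ therefore requires choosing values of $\gamma^d$ covering a sufficient range of slopes (in particular with both signs), but this affects only the quality of the approximation, not the validity of the inclusions stated in the theorem.
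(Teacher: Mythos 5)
Your proposal is correct and follows essentially the same argument as the paper: lift an output point to a state, use $(c^d)^\top x = y_2 - \gamma^d y_1 \le \intercept_T(c^d)$ (the paper routes this through Theorem~\ref{thm:hm1}, you use the definition of $\intercept_T$ directly, which is the same content) for the outer inclusion, and use that $L$ maps the convex set $\mathcal{R}_T(x_0)$ to a convex $\mathcal{R}_T^y(x_0)$ containing the points $y_T^\star(\gamma^d)$ for the inner inclusion.
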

 
 \begin{proof}
By definition, for   any $\bar y \in \mathcal{R}_{T}^y(x_0)  $ there exists an $\bar x\in \mathcal{R}_T(x_0)$ such that $\bar y^\top=[ l_1^\top \bar x,l_2^\top \bar x]$.
By Theorem \ref{thm:hm1}, for any direction $c^d$ it holds that $\mathcal{R}_T(x_0)\subset \mathcal{H}_T( c^d)$. Consequently, $\bar x\in \mathcal{R}_T(x_0)$ implies $\bar x\in \mathcal{H}_T( c^d)$.
By substituting the definition of $c^d$ given in the statement we get  
\begin{align*}
&\bar x\in \mathcal{H}_T( c^d)
\Leftrightarrow (c^d)^\top \bar x \le \intercept_T(c^d) \Leftrightarrow \\
&\Leftrightarrow (l_2-\gamma^d l_1)^\top \bar x \le \intercept_T(c^d) \Leftrightarrow  l_2^\top \bar x \le \gamma^d l_1^\top \bar x +\intercept_T(c^d).
\end{align*}
The last inequality implies $\bar y^\top=[ l_1^\top \bar x,l_2^\top \bar x] \in \mathcal{H}^y_T(\gamma^d)$.
Consequently, $\mathcal{R}_{T}^y(x_0)\subseteq \mathcal{H}^y_T(\gamma^d)$ for any $\gamma^d$ and therefore $\mathcal{R}_{T}^y(x_0)\subseteq  \mathcal{R}^{y,out}_T(x_0)$. If $\mathcal{R}_T(x_0)$ is convex, then  $\mathcal{R}_{T}^y(x_0)$ is convex as well. The  points $y^\star_T(\gamma^d)$ belong to $\mathcal{R}_{T}^y(x_0)$ by construction. Consequently, by convexity, it must hold that 
$\mathcal{R}^{y,in}_T(x_0) \subseteq  \mathcal{R}_{T}^y(x_0)$.
 \end{proof}

\section{Computing the tangent hyperplanes}
\label{computing}
The success of the hyperplane method hinges on the possibility of efficiently evaluating, for any given direction $c$, the  constant $\intercept_T(c)$ in  \eqref{eq:intercept_general}. 
 Note that this problem  is equivalent to the following finite time optimal control problem
\begin{align}
\intercept_T(c):=\max_{\sigma\in  {\mathcal S}} &\quad c^\top x(T)\label{intercept}\\
 \mbox{s.t.}&\quad  \dot x(t)=f(x(t) ,\sigma(t)),\quad \forall t\in[0,T],\notag\\
 &\quad x(0)=x_0 \notag.
\end{align}
In the rest of this section, we aim at solving \eqref{intercept}. To this end, we start by  recalling 
 the linear case, for which the hyperplane  method was originally derived in \cite{krogh}.

\subsection{Linear systems with bounded input}
\label{sec:r_tol_af}

The hyperplane method was originally proposed  for linear   systems with bounded inputs
\begin{align}
\dot x(t)=Ax(t)+B\sigma(t),
\label{eq:lin}
\end{align}
where $x(t)\in\mathbb{R}^n$, $A\in\mathbb{R}^{n\times n}$,  $B\in\mathbb{R}^{n\times m}$ and $\sigma(t)\in\R^m$. 
Since biological signals are non-negative and bounded, we here make the following assumption on the input set $\mathcal{S}$.
 \begin{assumption} 
The input function  $\sigma$ belongs to the admissible set  
$\mathcal{S}_\Sigma:=\{\sigma \mid \sigma(t)\in\Sigma, \forall t\in[0,T]\},$
 where  $\Sigma = \Sigma_1\times \ldots\times\Sigma_m$. 
  Moreover,  there exist  $\bar \sigma_r>0, r \in\N[ 1, m],$ such that  
either (a) every set  $\Sigma_r$ is the  interval $\Sigma^c_r:=[0,\bar \sigma_r]$ (continuous and bounded input set),  
or (b) for every  set  $\Sigma_r$  there exists $2\le q_r<+\infty$ such that $\Sigma^d_r:=\left\{ 0=\sigma_r^1<\sigma_r^2<\ldots<\sigma_r^{q_r}=\bar{\sigma}_r \right\}\subset \mathbb{R}_{\ge0}$ (finite  input set). We  set $\Sigma^c:= \Sigma^c_1\times \ldots\times\Sigma^c_m$, $\Sigma^d:= \Sigma^d_1\times \ldots\times\Sigma^d_m$,  and denote by $\mathcal{S}_{\Sigma^c}$ and $\mathcal{S}_{\Sigma^d}$ the  corresponding admissible sets.
 \label{ass:linear_input}
 \end{assumption}

\noindent In the case of a continuous and bounded input set, i.e. under Assumption \ref{ass:linear_input}-(a), it was shown in \cite{krogh} that  it is possible to solve the control problem in \eqref{intercept} in closed form by  using the Maximum Principle \cite{liberzon2011calculus}.
\begin{proposition}[Tangent hyperplanes for linear systems   with bounded and continuous inputs]
\label{tangent}
Consider system~\eqref{eq:lin} and suppose that Assumption \ref{ass:linear_input}-(a) holds.  Define the following admissible input function, expressed component-wise for every 
$r$th entry, $r=1,\hdots,m$, as
\begin{align}
\sigma_r^{\star}(t):&= \begin{cases}
   \bar{\sigma}_r & \text{if } \quad c^\top e^{A(T-t)}b_r> 0; \\
  0 & \text{if }\quad c^\top e^{A(T-t)}b_r< 0; \\
   0\le \sigma_r \le \bar{\sigma}_r & \text{if }\quad c^\top e^{A(T-t)}b_r= 0;
  \end{cases}
\end{align}
where $b_r$ denotes the $r$th column of $B$.
Then
\begin{align}
\textstyle \intercept_T(c)&=\textstyle c^\top e^{AT}x_0+\sum_{r=1}^m \bar{\sigma}_r  \int_{0}^T \left[ c^\top e^{A(T-t)}b_r \right]_+ dt,
\label{var1}
\end{align}
 where $[g(t)]_+$ denotes the positive part of the function, namely $[g(t)]_+=g(t)$ when $g(t)>0$ and zero otherwise. Suppose additionally that the pair $(A,b_r)$ is reachable, for every  $r\in\N[1,m].$
Then there exists  no interval $[\tau_1,\tau_2]$, with $0\le \tau_1<\tau_2\le T$, such that $c^\top e^{A(T-t)}b_r= 0$ for every $t\in [\tau_1,\tau_2]$. Consequently,   a tangent point    can be obtained as 
\begin{align}\label{var2}
\textstyle x^{\star}_T(c)&:=\textstyle e^{AT}x_0+\int_{0}^T e^{A(T-t)}B\sigma^{\star}(t)dt. \qedhere 
\end{align} 
\hfill{$ \square$}
\end{proposition}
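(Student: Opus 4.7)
My plan is to treat \eqref{intercept} as an optimal control problem whose cost is linear in the state, and exploit linearity to solve it by pointwise maximization rather than invoking the full Maximum Principle machinery.

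First I would apply the variation of constants formula to write the terminal state explicitly as
\begin{equation*}
x(T;x_0,\sigma)=e^{AT}x_0+\int_0^T e^{A(T-t)}B\,\sigma(t)\,dt,
\end{equation*}
so that, using $B\sigma(t)=\sum_{r=1}^m b_r\sigma_r(t)$,
\begin{equation*}
c^\top x(T;x_0,\sigma)=c^\top e^{AT}x_0+\sum_{r=1}^m\int_0^T \big(c^\top e^{A(T-t)}b_r\big)\,\sigma_r(t)\,dt.
\end{equation*}
The first term is independent of $\sigma$, and each summand in the integral is a linear functional of $\sigma_r$. The inputs $\sigma_r(t)$ live in independent intervals $[0,\bar\sigma_r]$, so the integral is maximized by pointwise (almost-everywhere) maximization: on the set where the coefficient $c^\top e^{A(T-t)}b_r$ is strictly positive we pick $\sigma_r(t)=\bar\sigma_r$, where it is strictly negative we pick $\sigma_r(t)=0$, and where it vanishes any admissible value is optimal. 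This immediately gives the bang-bang candidate $\sigma^\star$ in the proposition and the closed-form expression \eqref{var1}, where the positive-part notation collapses the two nontrivial cases into one integrand.

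For the second claim, suppose $(A,b_r)$ is reachable and that $c^\top e^{A(T-t)}b_r\equiv 0$ on some interval $[\tau_1,\tau_2]$ with $\tau_1<\tau_2$. Since the map $t\mapsto c^\top e^{A(T-t)}b_r$ is real-analytic, it would then vanish identically on $[0,T]$; differentiating $k$ times at $t=T$ yields $c^\top A^k b_r=0$ for all $k\ge 0$, so $c$ annihilates the reachability subspace of $(A,b_r)$, which by Kalman's rank condition is $\mathbb{R}^n$, forcing $c=0$. For $c=0$ the statement is trivial; for $c\neq 0$ we conclude that the zero set of $t\mapsto c^\top e^{A(T-t)}b_r$ has empty interior (in fact is finite, being the zero set of a non-trivial analytic function on a compact interval), so it has Lebesgue measure zero. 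Hence the ambiguous third branch in the definition of $\sigma_r^\star$ is irrelevant, the admissible $\sigma^\star$ is well-defined up to a null set, and substituting it back into the variation of constants formula gives the tangent point \eqref{var2}, which belongs to $\mathcal{R}_T(x_0)\cap\mathrm{H}_T(c)$ by construction since it attains $c^\top x=v_T(c)$.

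The only step that requires real care is the reachability argument ruling out vanishing on a subinterval; the temptation is to assume it is automatic, but without reachability one could genuinely have intervals on which the switching function is zero and the bang-bang description would not single out a tangent point. The analyticity-plus-Kalman argument is the cleanest way to dispense with this case, and once it is in place, everything else follows from linearity of the cost in $\sigma$ and the decoupled product structure of $\Sigma^c$.
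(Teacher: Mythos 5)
Your argument is correct, and it takes a different route from the one the paper points to: the paper omits the proof and refers to \cite{krogh} (Lemma 2.1 and Theorem 2.1 there), where the hyperplane constant is obtained by invoking the Maximum Principle, whereas you bypass the PMP entirely. You write the terminal state via the variation-of-constants formula, observe that $c^\top x(T)$ is an affine functional of $\sigma$ with integrand $\sum_r (c^\top e^{A(T-t)}b_r)\sigma_r(t)$, and maximize pointwise over the box $[0,\bar\sigma_1]\times\cdots\times[0,\bar\sigma_m]$; since the switching functions are continuous, the resulting bang-bang input is measurable and admissible, so the supremum is attained and equals \eqref{var1}. This is more elementary and self-contained than the PMP route, whose necessary conditions must anyway be combined with the linearity of the cost to conclude optimality; the PMP route, on the other hand, places the result in the standard optimal-control framework the rest of the paper leans on (e.g.\ the bang-bang arguments in Corollary \ref{cor:linear}). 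Your treatment of the second claim (real-analyticity of $t\mapsto c^\top e^{A(T-t)}b_r$ plus the Kalman rank condition, giving finitely many zeros and hence an a.e.-unique $\sigma^\star$ and a well-defined tangent point) is the standard argument and matches what the reachability hypothesis is for. Two cosmetic points: the "no vanishing interval" claim tacitly requires $c\neq 0$, which you correctly isolate, and it is worth stating explicitly that any choice on the null set where the switching function vanishes yields the same $x_T^\star(c)$, so \eqref{var2} is unambiguous.
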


The proof follows the same lines as  \cite[Lemma 2.1 and Theorem 2.1]{krogh} and is omitted for the sake of  brevity.

 By using the explicit characterisation given in Proposition~\ref{tangent} together with Theorems \ref{thm:hm1} and \ref{thm:hm2}, one can efficiently  construct both an inner and an outer approximation of  the (output) reachable set for  linear systems with  \textit{continuous and bounded  input set} $\Sigma^c$, as summarised in the next corollary. Therein we also show  how  the same result   can be  extended  to \textit{finite input sets} $\Sigma^d$.
 \begin{corollary}[The hyperplane method for linear systems] \label{cor:linear}
 Consider system~\eqref{eq:lin} and suppose that either Assumption \ref{ass:linear_input}-(a) or  Assumption \ref{ass:linear_input}-(b) holds. Let $\intercept_T(c^d)$ and $x^{\star}_T(c^d) $ be computed as in  \eqref{var1} and \eqref{var2}.
 Then $\mathcal{R}^{out}_T(x_0)$ and $\mathcal{R}^{in}_T(x_0)$ ($ \mathcal{R}^{y,out}_{T}(x_0)$ and $\mathcal{R}^{y,in}_{T}(x_0)$, resp.) as defined in Theorem \ref{thm:hm1} (Theorem~\ref{thm:hm2}, resp.)  are outer and inner approximations of $\mathcal{R}_T(x_0)$ (of $\mathcal{R}^{y}_{T}(x_0)$, resp.).
  \end{corollary}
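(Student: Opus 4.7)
The plan is to reduce the corollary to a direct application of Theorems~\ref{thm:hm1} and~\ref{thm:hm2}, using the closed-form expressions for $\intercept_T(c^d)$ and $x^{\star}_T(c^d)$ supplied by Proposition~\ref{tangent}. Under Assumption~\ref{ass:linear_input}-(a) this is immediate: Proposition~\ref{tangent} applies verbatim to give \eqref{var1} and \eqref{var2}; plugging these into Theorem~\ref{thm:hm1} yields $\mathcal{R}^{out}_T(x_0)$ and, using that $\mathcal{R}_T(x_0)$ is convex (since $\Sigma^c$ is convex and the flow is linear and input-affine), also $\mathcal{R}^{in}_T(x_0)$; plugging them into Theorem~\ref{thm:hm2} yields the output analogues.

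The substantive work is under Assumption~\ref{ass:linear_input}-(b). Here I would argue that the same formulas \eqref{var1}--\eqref{var2} remain valid when the admissible input set is restricted from $\Sigma^c$ to the finite set $\Sigma^d$. One direction is trivial: $\mathcal{S}_{\Sigma^d}\subseteq \mathcal{S}_{\Sigma^c}$, so the maximum of $c^{\top}x(T)$ over $\mathcal{S}_{\Sigma^d}$ is at most the value given by \eqref{var1}. For the reverse direction, I would exploit the bang-bang structure of the extremal control $\sigma^{\star}$ in Proposition~\ref{tangent}: each component $\sigma^{\star}_r(t)$ takes values in $\{0,\bar{\sigma}_r\}$, and both endpoints belong to $\Sigma^d_r$ by construction. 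The reachability hypothesis on every pair $(A,b_r)$ is exactly what ensures, via analyticity of $t\mapsto c^{\top}e^{A(T-t)}b_r$, that the switching function vanishes only on a set of Lebesgue measure zero; on this null set $\sigma^{\star}_r$ can be redefined to take a value in $\Sigma^d_r$ without perturbing the integral in \eqref{var1}. Thus $\sigma^{\star}\in\mathcal{S}_{\Sigma^d}$ realizes the bound, and \eqref{var1}--\eqref{var2} persist.

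With $\intercept_T(c^d)$ and $x^{\star}_T(c^d)$ available for both cases, the outer approximations $\mathcal{R}^{out}_T(x_0)$ and $\mathcal{R}^{y,out}_{T}(x_0)$ follow at once from Theorems~\ref{thm:hm1} and~\ref{thm:hm2}, noting that in case (b) the chain $\mathcal{R}_T^{\Sigma^d}(x_0)\subseteq \mathcal{R}_T^{\Sigma^c}(x_0)\subseteq \cap_d \mathcal{H}_T(c^d)$ gives the desired containment. The inner approximations then follow from the fact that each tangent point $x^{\star}_T(c^d)$ lies in $\mathcal{R}_T(x_0)$ by construction, together with the convexity of $\mathcal{R}_T(x_0)$ invoked in Theorem~\ref{thm:hm1}(2).

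The main obstacle, to the extent there is one, is the rigorous treatment of the measure-zero ambiguity of $\sigma^{\star}$ in case (b); this is precisely where the reachability assumption enters and where some care is needed to ensure that the $\Sigma^d$-valued bang-bang control is measurable and still extremal. Everything else is essentially bookkeeping on top of Proposition~\ref{tangent} and Theorems~\ref{thm:hm1}--\ref{thm:hm2}, which explains why the authors only sketch the argument.
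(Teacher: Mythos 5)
Your treatment of case (a) coincides with the paper's: there the reachable set is convex and the corollary is a direct consequence of Proposition \ref{tangent} and Theorems \ref{thm:hm1}--\ref{thm:hm2}.

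For case (b), however, there is a genuine gap concerning the \emph{inner} approximations. Your argument shows that, for each direction $c^d$, the value $\intercept_T(c^d)$ in \eqref{var1} and the tangent point $x^{\star}_T(c^d)$ in \eqref{var2} are attained by a bang-bang control taking values in $\{0,\bar\sigma_r\}\subseteq\Sigma^d_r$ (modulo a measure-zero redefinition, justified by the reachability of $(A,b_r)$). This is enough for the outer approximations, and indeed for those the reverse direction is not even needed, since $\mathcal{R}^{d}_T(x_0)\subseteq\mathcal{R}^{c}_T(x_0)\subseteq\cap_d\mathcal{H}_T(c^d)$. But to conclude that $\mathcal{R}^{in}_T(x_0)=\textup{conv}\{x^{\star}_T(c^d)\}$ (and likewise $\mathcal{R}^{y,in}_T(x_0)$) is an inner approximation, Theorem \ref{thm:hm1}(2) requires the reachable set $\mathcal{R}_T(x_0)$ to be convex, and in case (b) this set is $\mathcal{R}^{d}_T(x_0)$, the reachable set under the \emph{finite} input set $\Sigma^d$. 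You simply invoke its convexity, but your extremal-control argument does not deliver it: knowing that every support value and one maximizer per direction are reachable with $\Sigma^d$-valued inputs only concerns exposed points of the convex hull; it does not show that convex combinations of the tangent points are themselves reachable with discrete-valued inputs, so $\mathcal{R}^{in}_T(x_0)\subseteq\mathcal{R}^{d}_T(x_0)$ does not follow.

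The paper closes exactly this gap by proving the stronger statement $\mathcal{R}^{d}_T(x_0)=\mathcal{R}^{c}_T(x_0)$: for an \emph{arbitrary} point $\bar x\in\mathcal{R}^{c}_T(x_0)$ (not just the per-direction maximizers), the bang-bang with bounded number of switchings property of \cite[Theorem 8.1.2]{sussmann}, valid because $\Sigma^c$ is a convex polyhedron, yields a $\Sigma^{bb}$-valued control reaching $\bar x$ at the same time $T$; combined with $\Sigma^{bb}\subseteq\Sigma^d\subseteq\Sigma^c$ this gives set equality, hence $\mathcal{R}^{d}_T(x_0)$ inherits convexity (and compactness) from $\mathcal{R}^{c}_T(x_0)$ and both inner and outer approximations follow. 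To repair your proof you would need this set-equality (or some other argument for convexity of $\mathcal{R}^{d}_T(x_0)$), at which point your measure-zero discussion of the singular set becomes unnecessary.
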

  
  \begin{proof}
  In the case of continuous and bounded input, that is, under Assumption \ref{ass:linear_input}-(a), the reachable set $\mathcal{R}_T(x_0)$ is convex and the statement is  a trivial consequence on Theorems \ref{thm:hm1} and \ref{thm:hm2} and Proposition \ref{tangent}. We here show that the same result holds also under Assumption \ref{ass:linear_input}-(b). The proof of this second part follows from the fact that the  reachable set $\mathcal{R}_{T}^c(x_0)$, obtained by using the continuous input set $\Sigma^c$, and the  reachable set $\mathcal{R}^d_T(x_0)$, obtained by using the discrete input set $\Sigma^{d}$, coincide. To prove this, let   $\mathcal{R}^{bb}_T(x_0)$ be the  reachable set obtained using  $\Sigma^{bb}_r:=\{0,\bar\sigma_r\}$ for any $r$, that is, the set of vertices of $\Sigma^c$.  
Consider now an arbitrary point $\bar x\in \mathcal{R}_T^c(x_0)$, which is a compact set. By definition there exists an admissible input function  in $\Sigma^c$ that steers $x_0$ to $\bar x$ in time $T$. Since $\Sigma^c$ is a convex polyhedron, by \cite[Theorem 8.1.2]{sussmann}, system~\eqref{eq:lin} with input set $\Sigma^c$ has the bang-bang with  bound on the number of switchings (BBNS) property. That is, for each $\bar x\in \mathcal{R}_T^c(x_0)$  there exists a bang-bang input function in $\Sigma^{bb}$ that reaches $\bar x$  in the same time $T$ with a finite number of discontinuities. Thus $\bar x\in \mathcal{R}_T^{bb}(x_0)$.  Since this is true for any $\bar x\in \mathcal{R}_T^c(x_0)$, we get   $\mathcal{R}_{T}^c(x_0)\subseteq \mathcal{R}^{bb}_T(x_0)$. From $\Sigma^{bb}\subseteq\Sigma^d\subseteq\Sigma^c$ we get $ \mathcal{R}^{bb}_T(x_0)\subseteq \mathcal{R}^{d}_T(x_0)\subseteq \mathcal{R}_T^c(x_0)$,  concluding the proof.  \end{proof}

\subsection{Switched affine systems}
\label{sec:r_tol_sw}

In this section, we propose an extension of the hyperplane method  to the case of a switched affine  system~of the form 
\begin{align}\label{eq:swi}
\dot x(t)=A_{\sigma(t)}x(t) + b_{\sigma(t)},
\end{align}
where the switching signal $\sigma(t)\in\Z[1,I]$    is the  input function, $I \ge 2$ is the number of modes,  $x(t)\in \mathbb{R}^n$ and $A_{i}\in \mathbb{R}^{n\times n}, b_{i} \in \mathbb{R}^n$ for all $i\in\Z[1,I]$. We make the following assumption.

\begin{assumption}\label{ass:input}
The   switching signal $\sigma(t)$  switches  $K$ times within 
the finite set $\Z[1,I]$ at fixed switching   instants  $0=t_0 <\ldots <t_{K+1}=T$,
 that is, $\sigma\in{\mathcal S}_{I}^K$, where
$$
{\mathcal S}_{I}^K:=\{\sigma\mid \sigma(t)= {i_k}\in \Z[1,I], \forall t\in[t_k  ,t_{k+1}),   k\in\N[0,K]\}.
$$
\end{assumption}

For every $k\in\N[0, K]$ and $i\in\N[1,I]$ we define $\bar A_{i}^k:=e^{A_{i}{(t_{k+1}-t_k)}}$ and \sloppy $\bar b_{i}^k=[\int_0^{(t_{k+1}-t_k)} e^{A_{i}\tau} d\tau]  b_{i}$. Moreover, we set $x_k:=x(t_k)$. 
 Note that under Assumption \ref{ass:input} the reachable set of system~\eqref{eq:swi} consists of a finite number of points that can be computed by solving  the state equations for each possible switching signal. Since the cardinality of the  set  ${\mathcal S}_{I}^K$ grows exponentially with $K$,  this approach is however computationally infeasible even for small systems.
We here show that, on the other hand,  the  hyperplane constants defined in \eqref{intercept}  can be  computed by solving a mixed integer linear program {(MILP)}, thus  allowing us to exploit the  sophisticated software  that has been developed to solve large MILPs in the last years.
\begin{proposition}[Tangent hyperplanes for switched  affine systems]\label{tangent2}
Consider system~\eqref{eq:swi} and suppose that Assumption~\ref{ass:input} holds.  Take a vector  
${\bf M}\in\R^n$ such that  ${\bf M}\ge |x_k|$ component-wise for all $k\in\N[0,K]$. Then 
\begin{align}
\intercept_T(c)=\max_{x_k,z_i^k,\gamma_i^{k}} &\quad c^\top x_{K+1}\label{MILP}\\
 \textup{s.t.}
 &\quad   z_{i}^{k+1}\le  (\bar A_{i}^{k}x_{k}+\bar b_{i}^{k})+{\bf M}(1-\gamma_i^{k}),\notag\\ 
 &\quad z_{i}^{k+1}\ge  (\bar A_{i}^{k}x_{k}+\bar b_{i}^{k})-{\bf M}(1-\gamma_i^{k}),\notag\\
 & \quad z_{i}^{k+1}\ge  -{\bf M}\gamma_i^{k},  \quad z_{i}^{k+1}\le  {\bf M}\gamma_i^{k},\notag\\
 &\quad z^k_i \in \mathbb{R}^n, \quad \forall k\in\Z[1,K+1], \forall  { i\in\N[1,I]},\notag \\
   &\quad \gamma^k_i \in \{0,1\}, \quad \forall k\in\Z[0,K], \forall { i\in\N[1,I]},\notag \\
   &\quad \textstyle  x_{k}=\sum_{i=1}^I z_i^k \in \mathbb{R}^n,\quad \forall k\in\Z[1,K+1],\notag\\
 &\quad   \textstyle \sum_{i=1}^I \gamma_i^{k}=1, \quad \forall k\in\Z[0,K],\notag\\
 &\quad x_0\in \mathbb{R}^n {\rm \ assigned.} \notag  \qedhere
\end{align}
\end{proposition}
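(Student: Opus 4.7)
The strategy is to reduce \eqref{intercept} to a finite combinatorial maximisation over mode sequences and then encode that maximisation as \eqref{MILP} via the classical big-M linearisation of disjunctive constraints.

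First, under Assumption~\ref{ass:input} the switching instants are fixed, so the switching signal is fully specified by a tuple of active modes $(i_0,\ldots,i_K)\in\N[1,I]^{K+1}$, one per interval $[t_k,t_{k+1})$. On each such interval the dynamics are linear time-invariant, so integrating the ODE gives the exact discrete-time recursion $x_{k+1}=\bar A_{i_k}^k x_k+\bar b_{i_k}^k$, with $\bar A_i^k$ and $\bar b_i^k$ as defined just after Assumption~\ref{ass:input}. Hence \eqref{intercept} is equivalent to maximising $c^\top x_{K+1}$ over the finite set $\N[1,I]^{K+1}$ of mode sequences, subject to that recursion and to the assigned initial state $x_0$.

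Second, I would encode the mode choice at step $k$ through binary selectors $\gamma_i^k\in\{0,1\}$ satisfying $\sum_{i=1}^I\gamma_i^k=1$, which are in bijection with the $I^{K+1}$ admissible mode sequences via $\gamma_i^k=\delta_{i,i_k}$. The technical obstacle is the bilinear product $\gamma_i^k(\bar A_i^k x_k+\bar b_i^k)$ produced when the selection is plugged into the recursion, which is not expressible directly in an LP. I would linearise each such product by an auxiliary vector $z_i^{k+1}\in\R^n$ together with the four inequality groups in \eqref{MILP}, which realise the equivalence
\begin{equation*}
z_i^{k+1}=\gamma_i^k\bigl(\bar A_i^k x_k+\bar b_i^k\bigr)=
\begin{cases}\bar A_i^k x_k+\bar b_i^k,& \gamma_i^k=1,\\ 0,& \gamma_i^k=0.\end{cases}
\end{equation*}
Indeed, when $\gamma_i^k=1$ the first pair of constraints collapses to $z_i^{k+1}=\bar A_i^k x_k+\bar b_i^k$, while the second pair yields only the loose bounds $-{\bf M}\le z_i^{k+1}\le {\bf M}$; when $\gamma_i^k=0$ the second pair forces $z_i^{k+1}=0$ and the first pair reduces to the loose bounds $\bar A_i^k x_k+\bar b_i^k\pm{\bf M}$. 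The identity $x_{k+1}=\sum_{i=1}^I z_i^{k+1}$ then recovers the discrete recursion evaluated at the selected mode.

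To finish I would establish a two-way correspondence: every admissible mode sequence yields a feasible point of \eqref{MILP} with identical terminal cost $c^\top x_{K+1}$ by setting $\gamma_i^k=\delta_{i,i_k}$ and defining $z_i^{k+1}$ as above; conversely, the constraints in \eqref{MILP} force every feasible MILP solution to come from a unique mode sequence through $i_k=\arg\max_i\gamma_i^k$, along which the same recursion is reproduced. Equality of the two optimal values then follows at once. The main obstacle, and the reason the component-wise bound ${\bf M}\ge|x_k|$ must be imposed \emph{uniformly} over every attainable intermediate state, is in ensuring that all of the "inactive" big-M inequalities above are truly non-binding, so that the linearisation is exact rather than a genuine restriction of the feasible set.
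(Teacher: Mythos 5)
Your proposal is correct and follows essentially the same route as the paper: reduce \eqref{intercept} to a discrete-time maximisation over mode sequences via the fixed switching instants, introduce binary selectors $\gamma_i^k$ with $\sum_i\gamma_i^k=1$ and auxiliary copies $z_i^{k+1}=(\bar A_i^k x_k+\bar b_i^k)\gamma_i^k$, and linearise the bilinear products exactly with the big-M inequalities (the paper invokes the procedure of Seatzu et al.\ and the big-M reformulation of Bemporad--Morari, which you rederive explicitly). Your remark that ${\bf M}$ must bound all attainable intermediate states so that the inactive constraints are non-binding is precisely the role of the hypothesis ${\bf M}\ge|x_k|$ in the statement.
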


\begin{proof}
To prove the statement we follow a  procedure similar to the one in \cite[Section~IV.A]{bemporad}.
Under Assumption~\ref{ass:input} the switching  signal $\sigma(t)$ is such that
$\sigma(t)=i_k, \forall t\in[t_k,t_{k+1}), \forall k\in\N[0,K]$. Therefore, the finite time optimal control  problem in \eqref{intercept} can  be rewritten as
\begin{align}
\intercept_T(c)=\max_{{i_k}\in \{1,\ldots, I\}} &\quad c^\top x_{K+1}\label{discrete}\\
 \mbox{s.t.}&\quad   x_{k+1}=\bar A_{i_k}^kx_k+\bar b_{i_k}^k\quad \forall k \in\Z[0,K]\notag\\
 &\quad x_0\in \mathbb{R} {\rm \ assigned.}  \notag
\end{align}
Let us introduce the binary variables $\gamma_i^{k}\in\{0,1\}$ defined so that, for each $ i\in\N[1,I]$ and $k \in\Z[0,K]$, $\gamma_i^{k} =1$ if and only if the system~is in mode $i$   in the time interval $[t_k,t_{k+1})$. Moreover, let us introduce a copy of the state vector for each possible update of the system~in each possible mode: 
$z_i^{k+1}=(\bar A_{i}^{k}x_{k}+\bar b_{i}^{k})\gamma_i^{k}$.
Then \eqref{discrete} is equivalent to the following optimisation problem
\begin{align}
\intercept_T(c):=\max_{x_k,z_i^k,\gamma_i^{k}} &\quad c^\top x_{K+1}\label{con}\\
 \mbox{s.t.}
 &\quad   z_i^{k+1}=(\bar A_{i}^{k}x_{k}+\bar b_{i}^{k})\gamma_i^{k}, \quad \forall i \in\Sigma,  \notag \\
 &\quad   \textstyle \sum_{i=1}^I \gamma_i^{k}=1, \quad \forall k \in\Z[0,K],\notag\\
&\quad \textstyle  x_{k}=\sum_{i=1}^I z_i^k,\quad \forall k \in\Z[1,K+1]\notag,\\
 &\quad x_0\in \mathbb{R} {\rm \ assigned.}  \notag
\end{align}
Finally, by using the big-M method    in \cite[Eq. (5b)]{bemporad1999control}, the first equality  constraint in the optimization problem \eqref{con} can be equivalently replaced by
\begin{align*}
z_{i}^{k+1}&\le  (\bar A_{i}^{k}x_{k}+\bar b_{i}^{k})+{\bf M}(1-\gamma_i^{k}), \qquad z_{i}^{k+1}\ge  -{\bf M}\gamma_i^{k},\\
z_{i}^{k+1}&\ge  (\bar A_{i}^{k}x_{k}+\bar b_{i}^{k})-{\bf M}(1-\gamma_i^{k}), \qquad z_{i}^{k+1}\le  {\bf M}\gamma_i^{k},
\end{align*}
leading to the equivalent reformulation given in \eqref{MILP}. \end{proof}\

We summarize our results on the hyperplane method for switched affine systems in the next corollary, which is an immediate consequence of Proposition \ref{tangent2} and Theorems \ref{thm:hm1}, \ref{thm:hm2}.

\begin{corollary}[The hyperplane method for switched affine systems] 
\label{thm:hm1_sw}
 Given system~\eqref{eq:swi},  let $x_0\in \mathbb{R}^n$ be the initial state  and suppose that Assumption \ref{ass:input} holds. Let $\intercept_T(c^d)$ be computed as in \eqref{MILP}.
 Then $\mathcal{R}^{out}_T(x_0)$ and $  \mathcal{R}^{y,out}_{T}(x_0)$  as defined in Theorems \ref{thm:hm1} and \ref{thm:hm2}  are outer approximations of $\mathcal{R}_T(x_0) $ and $ \mathcal{R}^{y}_{T}(x_0)$,  respectively. \hfill{$\square$}
 \end{corollary}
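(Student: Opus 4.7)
The plan is simply to assemble the corollary from the ingredients already established, so the proof is largely a bookkeeping exercise rather than a new argument. Theorems~\ref{thm:hm1}(1) and~\ref{thm:hm2} (outer part) yield outer approximations of $\mathcal{R}_T(x_0)$ and $\mathcal{R}_T^y(x_0)$ respectively, provided that (i) the reachable set $\mathcal{R}_T(x_0)$ is compact and (ii) the hyperplane constants $\intercept_T(c^d)$ defined in \eqref{eq:intercept_general} can be evaluated for each chosen direction $c^d$. My first step would be to verify (i): under Assumption~\ref{ass:input} the admissible set ${\mathcal S}_I^K$ has cardinality $I^{K+1}$, so $\mathcal{R}_T(x_0)$ is the finite union of the endpoints $x(T;x_0,\sigma)$ over $\sigma\in{\mathcal S}_I^K$, and thus trivially compact.

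Second, I would invoke Proposition~\ref{tangent2}, which gives an exact expression of $\intercept_T(c)$ as the optimal value of the MILP \eqref{MILP} for any direction $c\in\mathbb{R}^n$. For the corollary's first claim, I apply this directly with the $D$ directions $c^1,\ldots,c^D$: each hyperplane $\mathcal{H}_T(c^d)$ in \eqref{eq:half_space} is well defined, and Theorem~\ref{thm:hm1}(1) immediately gives $\mathcal{R}_T(x_0)\subseteq \mathcal{R}_T^{out}(x_0)=\cap_{d=1}^D\mathcal{H}_T(c^d)$. For the output part, I set $c^d:=l_2-\gamma^d l_1$ and again use Proposition~\ref{tangent2} to compute $\intercept_T(c^d)$; the projection argument in Theorem~\ref{thm:hm2} (whose proof only used that $\mathcal{R}_T(x_0)\subseteq \mathcal{H}_T(c^d)$ for each $c^d$) then yields $\mathcal{R}_T^y(x_0)\subseteq \mathcal{R}_T^{y,out}(x_0)$.

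There is essentially no obstacle here; the only point worth flagging explicitly is that the corollary deliberately states only the outer approximation. The inner approximation in Theorem~\ref{thm:hm1}(2) and the analogous part of Theorem~\ref{thm:hm2} require convexity of $\mathcal{R}_T(x_0)$, which fails in the switched-affine setting under Assumption~\ref{ass:input}: because the switching signal takes values in the finite set $\mathbb{N}[1,I]$ at fixed instants, $\mathcal{R}_T(x_0)$ is a finite point set and hence typically nonconvex. Thus no inner-approximation statement can be claimed without further hypotheses, which is exactly why the corollary omits it, and the proof is complete.
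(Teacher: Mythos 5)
Your proof is correct and matches the paper's own treatment: the paper presents this corollary as an immediate consequence of Proposition~\ref{tangent2} combined with Theorems~\ref{thm:hm1} and~\ref{thm:hm2}, exactly the assembly you carry out. Your added remarks on compactness (finite point set under Assumption~\ref{ass:input}) and on why the inner approximation is omitted mirror the paper's surrounding discussion.
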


Note that in the case of switched affine systems it is not possible to recover an inner approximation, since there is no guarantee in general that the reachable set is convex. By computing the convex hull of the points $x_{K+1}$ in \eqref{MILP} for each direction $c$ one could however recover an inner approximation of the convex hull of $\mathcal{R}_T(x_0)$.

\section{Controlled stochastic biochemical reaction networks}
\label{CME}

A   biochemical reaction network is a system~comprising $S$ molecular species $Z_1$, ..., $Z_{S}$ that interact through $R$ reactions. Let $Z(t)=[Z_1(t), ...,Z_{S}(t)]^\top$  be the vector describing the number of molecules present in the network for each species at time $t$, 
 that is, the state of the network at time $t$. 
 Since each reaction $r$ is a stochastic event \cite{gillespie1992rigorous}, 
 $Z(t)$ is a stochastic process. In the following, we always use the upper case to denote a process and the lower case to denote its realizations.
 For example, $z=[z_1, ...,z_{S}]^\top$ denotes a particular realization of the state~$Z(t)$ of the stochastic process at time~$t$.
 
 A typical reaction $r\in  \mathbb{N}[1,R]$ can be expressed as
\begin{equation}\label{eq:reaction}
\nu'_{1r} Z_1 +\ldots+ \nu'_{Sr} Z_S\quad \longrightarrow \quad \nu''_{1r} Z_1 +\ldots+ \nu''_{Sr}Z_S,
\end{equation}
where $\nu'_{1r} ,\ldots, \nu'_{Sr} \in\N$ and $\nu''_{1r} ,\ldots, \nu''_{Sr}  \in\N$ are the coefficients that determine 
how many molecules for each species are respectively consumed and produced by the reaction.  
 The net effect of each reaction can thus be summarized with the \textit{stoichiometric vector} $\nu_r\in \N^S$,  whose components are $\nu''_{sr}-\nu'_{sr}$ for $s=1,\ldots,S$.  We say that a reaction is of order $k$ if it involves $k$ reactant units (i.e., $\sum_{s=1}^S \nu'_{sr}=k$) and we distinguish two classes of reactions: \\ 
 -\textit{uncontrolled} reactions  that  happen, in the infinitesimal interval $[t,t+dt]$, with probability 
\begin{equation}\label{eq:mass_action}
\alpha_r(\theta_r,z) dt:=\theta_r \cdot h_r(z) \cdot dt,
\end{equation}
where $h_r(z)$ is a given function of the available molecules $z$ and $\theta_r\in\R_{\ge 0}$ is the so-called rate parameter; \\ - \textit{controlled} reactions for which there exists an external signal $\sigmau_r(t)$ such that the reaction fires at time $t$ with probability
 \begin{equation}\label{eq:mass_action_control}
\sigmau_r(t) \cdot \alpha_r(\theta_r,z) dt.
\end{equation}
In the following we refer to $\alpha_r(\theta_r,z)$ as the propensity of the reaction and
without loss of generality we assume that the controlled reactions are the first $\Rc$ ones. If  $h_r(z) :=\Pi_{s=1}^S \binom{z_s}{\nu'_{sr}} $ we say that reaction $r$ follows the laws of {\em mass action kinetics} as derived in \cite{gillespie1992rigorous}. Our analysis  can however be applied to generic functions $h_r(z)$, allowing us  to model different types of kinetics, as the Michaelis-Menten \cite[Section~7.3]{wilkinson2011stochastic}. 

To illustrate the following results, we consider  a  model of gene expression as running example.
\begin{example}[Gene expression reaction network]\label{ex:gene}
Consider a biochemical  network consisting of two species,  the mRNA ($M$) and the corresponding protein ($P$), and the following reactions
\begin{align*}
\emptyset \quad & \xrightarrow{\hspace{0.3cm} \alpha_1(k_r,z)\hspace{0.3cm} } \quad M \hspace{1cm} M  \quad  \xrightarrow{\hspace{0.3cm}\alpha_3(k_p,z) \hspace{0.3cm}}  \quad M+P \\
M \ \quad & \xrightarrow{\hspace{0.3cm}\alpha_2(\gamma_r,z)\hspace{0.3cm}}  \quad \emptyset \hspace{1.2cm}  P \ \quad  \xrightarrow{\hspace{0.3cm}\alpha_4(\gamma_p,z) \hspace{0.3cm}}  \quad \emptyset
\end{align*}
where  the  parameters
$k_r$ and $k_p$ are the mRNA and protein production rates, while $\gamma_r$ and $\gamma_p$ are the mRNA and protein degradation rates, respectively. 
The empty set notation is used whenever a certain species is produced or degrades without involving the other species.
In this context,  $Z=[M,P]^\top$, $z=[m,p]^\top$, $\theta=[\theta_1,\theta_2,\theta_3,\theta_4]^\top :=  [k_r,\gamma_r,k_p,\gamma_p]^\top$ and the stoichiometric matrix is
 $$\nu:=[\nu_1, \nu_2,\nu_3 , \nu_4]=\left[\begin{array}{cccc}1 & -1 & 0 & 0 \\0 & 0 & 1 & -1\end{array}\right].$$
In the case of mass action kinetics the propensities $\alpha_r(\theta_r,z)$ can be further specified as
$
\alpha_1(k_r,z)=k_r, \ \alpha_2(\gamma_r,z)=\gamma_r\cdot m,\  \alpha_3(k_p,z)=k_p\cdot m, \ \alpha_4(\gamma_p,z)=\gamma_p\cdot p.$
\end{example}

Note that since the propensity of each reaction   depends only on the current state of the system, the process $Z(t)$ is Markovian. Let $p(t,z):=\mathbb{P}[Z(t)=z]$ be the probability that the realization of the process $Z$ at time $t$ is $z$.  Following the same procedure as in \cite{gillespie1992rigorous} one can derive a set of equations, known as {\em chemical master equation} (CME), describing the evolution of $p(z,t)$ as a function of the  external signal $u(t)$

{\small{ \begin{align}
&\dot p(z,t)=\sum_{r=1}^\Rc \left[p(z-\nu_r,t) \alpha_r(\theta_r, z-\nu_r)- p(z,t) \alpha_r(\theta_r, z)\right]\sigmau_r(t)\notag  \\&+ \!\!\!\sum_{r=\Rc+1}^R \left[p(z-\nu_r,t) \alpha_r(\theta_r, z-\nu_r) - p(z,t) \alpha_r(\theta_r, z)\right],\ \ \forall z\in\N^S.\label{eq:CME_control}
\end{align}}}
Since the previous set of equations depends on the external signal $u$ we refer to it as the  \textit{controlled CME}.
Typical biochemical reaction networks involve many different species, whose counts can theoretically grow unbounded. Consequently,  the controlled {CME}  in  \eqref{eq:CME_control} is a system~of infinitely many  coupled ordinary differential equations that cannot be solved, even for very simple systems. Several analytical and computational  methods have been proposed in the literature to circumvent this difficulty, see \cite{wilkinson2011stochastic, goutsias2013markovian,ruess2014moment}  for a comprehensive review. In the following we limit our discussion to  two  methods:  \textit{moment equations} \cite{hespanha2008moment} and  \textit{finite state projection}  (FSP)  \cite{fsp}.

\subsection{The moment equations}
\label{sec:th:mom}

We start by considering  the case when all the reactions follow the laws of mass action kinetics and are at most of order one. In this case for each reaction $r$
the propensity $h_r(z)$ is  affine in the molecule counts vector $z$ and one can show that the moments equations are closed (i.e., the dynamics of moments up to any order $k$ do not depend on higher order moments), see for example \cite{lee2009moment}. 
Let $x_{\le 2}(t)$ be a vector whose components are the moments of $Z(t)$ up to second order. From \cite[Equations (6) and (7)]{lee2009moment} one gets
\begin{equation}\label{eq:controlled_moments_affine}
\dot x_{\le 2}(t)=A(\sigmau(t))x_{\le 2}(t)+b(\sigmau(t)).
\end{equation}

\begin{example}\label{ex:gene1}
Consider the  gene expression model of Example~\ref{ex:gene}. Assume that the reactions follow the mass action kinetics and that an external input signal influencing the first reaction, that is the mRNA production, is available (as in
 \cite{milias2011silico, parise2015guides,Olson2014a,Batt2012,menolascina2011analysis}), so that 
 $\alpha_1(k_r,z):=k_r\cdot \sigmau(t)$. Set 
$$x_{\le 2} :=[\mathbb{E}[M], \mathbb{E}[P], \mathbb{V}[M], \Cov[M,P], \mathbb{V}[P] ]^\top.$$
Then the moments evolution over time is expressed as
\bev 
\dot x_{\le 2}(t)=A x_{\le 2}(t)+B  \sigmau(t),
\label{eq:gene_1}
\eev 
where 
{\small \begin{align*}
A&=\left[ \arraycolsep=1.4pt\begin{matrix}- \gamma_r  & 0 & 0 & 0 & 0 \cr
k_p & -\gamma_p & 0 & 0 & 0\cr
 \gamma_r  & 0 & - 2  \gamma_r   & 0 & 0 \cr
0 & 0 & k_p & - ( \gamma_r  +\gamma_p) & 0\cr
k_p & \gamma_p & 0 & 2 k_p & - 2 \gamma_p\end{matrix}\right], \quad 
B=\ \left[\begin{matrix}
k_r  \cr  0 \cr k_r   \cr 0  \cr 0  \end{matrix} \right].
\end{align*}}
\end{example}

 Since the input $\sigmau(t)$ may appear  in the entries of the $A$ matrix, the moment equations~\eqref{eq:controlled_moments_affine} are in general  nonlinear. To overcome this issue we introduce the following assumption on the external signal $\sigmau(t)$.

\begin{assumption}\label{ass:input_switch}
The  external signal $\sigmau(t)$ can switch at most  $K$ times within the set $\Sigma^d$, as defined in Assumption \ref{ass:linear_input}, at preassigned switching   instants  $0=t_0 <\ldots <t_{K+1}=T$.
\end{assumption}

Assumption \ref{ass:input_switch} imposes that  the number of switchings and their timing during a given experiment is fixed a priori. This assumption can be motivated by the fact that changes in the external stimulus are costly and/or stressful for the cells. Moreover, it is trivially satisfied if the stimulus can only be changed simultaneously with some fixed events, such as culture dilution or measurements. The great advantage of Assumption \ref{ass:input_switch} is that, as illustrated in the following remark, it allows us to rewrite the nonlinear moment equations \eqref{eq:controlled_moments_affine} as a switched affine system so that the theoretical tools described in Section~\ref{sec:r_tol_sw} can be applied.

\begin{remark}
\label{construction}
The set $\Sigma^d$ has finite cardinality $I:=\Pi_{r=1}^m q_r$ and we can enumerate its elements as $u^i, i\in\N[1,I]$. Consequently, for any fixed external signal $\sigmau(t)$ satisfying Assumption \ref{ass:input_switch} we can construct a sequence of indices 
in $\N[1,I]$ such that, at any time $t$, $\sigma(t)=i$ if and only if $\sigmau(t)=u^i$. Such switching sequence  $\sigma$ satisfies Assumption \ref{ass:input}.
\end{remark}

\subsection{The finite state projection}
\label{sec:th:fsp}

Let us introduce a  total ordering $\{z^j\}_{j=1}^\infty$   in the set of all possible state realizations $z\in\mathbb{N}^S$. For the system~in Example~\ref{ex:gene}, we could for instance use the mapping
\begin{align*}
z^1&=(0,0), \ z^2=(1,0),\  z^3=(0,1),\ z^4=(2,0), \\
z^5&=(1,1), \ z^6=(0,2),\  z^7=(3,0),\ z^8=(2,1),\ \ldots
\end{align*}
where $(m,p)$ denotes the state with $m$ mRNA copies and $p$ proteins (see Fig. \ref{fig:FSP}).
\begin{figure}[H]
\begin{center}
\includegraphics[width=0.28\textwidth]{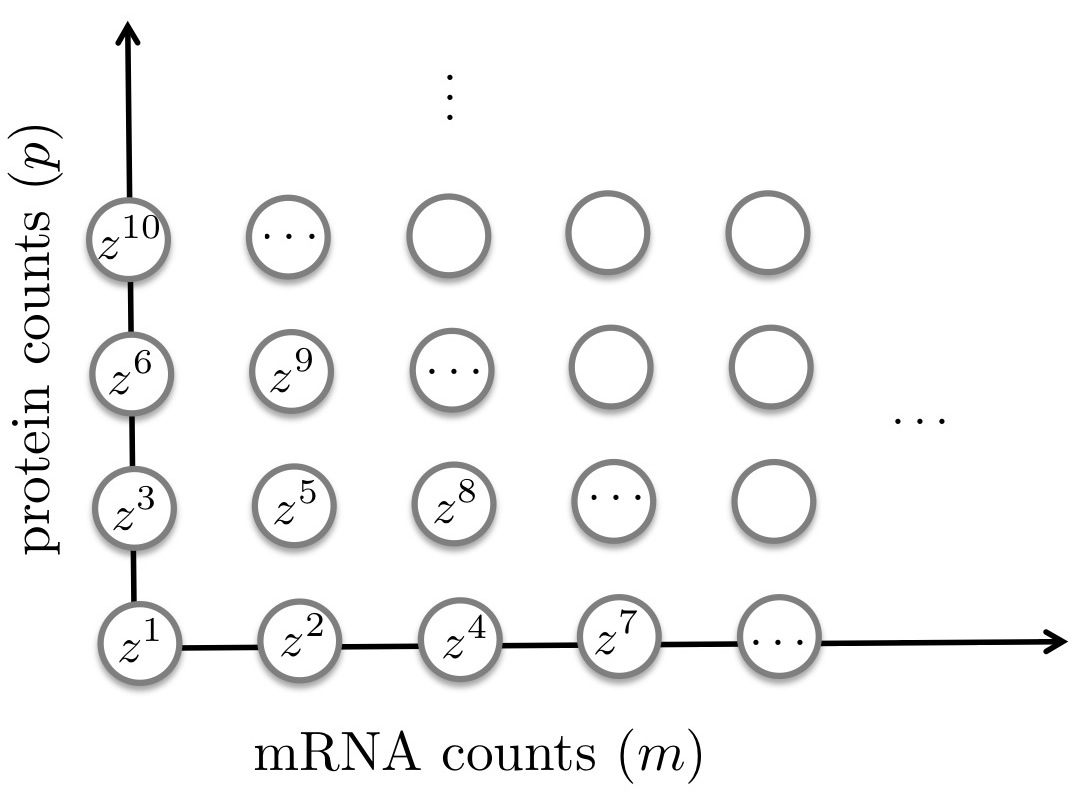}
\end{center}
\vspace{-0.5cm}
\caption{State space for the gene expression system~of Example \ref{ex:gene}.}
\label{fig:FSP}
\end{figure}
Following the same steps as in  \cite{fsp} and setting\footnote{Not to be confused with the symbol used to denote  the amount of protein.}   $P_j(t):=p(z^j,t)$, the \textit{controlled} {CME}  in \eqref{eq:CME_control} can be rewritten as the nonlinear infinite dimensional system~
\begin{align}
\dot P(t)=F({\sigmau(t)})P(t),
\label{eq:lin_inf_s}
\end{align}
where  $P(t)$ is an infinite dimensional vector with entries in $[0,1]$. If the   signal

$\sigmau(t)$ satisfies Assumption \ref{ass:input_switch}, then \eqref{eq:lin_inf_s} can be rewritten as an infinite dimensional linear switched system
\begin{align}
\dot P(t)=F_{\sigma(t)}P(t),
\label{eq:lin_inf}
\end{align}
with switching signal $\sigma(t)$ constructed from $\sigmau(t)$ as detailed in Remark \ref{construction},  $I=\Pi_{r=1}^m q_r$ modes and matrices $F_i:=F({\sigmau^i})$.   Note that system~\eqref{eq:lin_inf} can also be thought of as a Markov chain with countably many states $z^j\in\mathbb{N}^S$ and time-varying transition matrix $F_{\sigma(t)}$. 

As in  the FSP method  for the uncontrolled {CME} \cite{fsp},  one can try to approximate the behavior of the infinite Markov chain in \eqref{eq:lin_inf} by constructing a reduced Markov chain that keeps track of the probability of visiting only the states  indexed in a suitable set $J$. To this end, let us define the reduced order system
\begin{align}
\dot {\bar P}_J(t)=\left[F_{\sigma(t)}\right]_J\bar P_J(t), \quad \bar P_J(0)= P_J(0), 
\label{eq:lin_fin}
\end{align}
where $P_J(0)$ is the subvector of  $P(0)$ corresponding to the indices in $J$, and  $[F]_J$ denotes the submatrix of $F$ obtained by selecting only the rows and columns with indices in $J$. 
 Note that while the full matrix $F_{\sigma(t)}$ is stochastic, the reduced matrix $\left[F_{\sigma(t)}\right]_J$  is substochastic. Consequently,  the probability mass is in general not  preserved in \eqref{eq:lin_fin} (i.e. $\mathbbm{1}^\top  {\bar P}_J(t)$ may decrease with time).
From now on, we denote by $P(T;\sigma)$ and $\bar P_J(T;\sigma)$ 
 the solutions at time $T$ of system~\eqref{eq:lin_inf} and system~\eqref{eq:lin_fin},  respectively, when the switching signal $\sigma$ is applied. The dependence on the initial conditions  $P(0)$ and $P_J(0)$ is omitted to keep the notation  compact. As in the uncontrolled case, the truncated system~\eqref{eq:lin_fin} is a good approximation of the original system~\eqref{eq:lin_inf}  if most of the probability mass lies in $J$. However in the controlled case we need to guarantee that this happens for all possible switching signals. This intuition  is formalized  in the following assumption.
\begin{assumption}
\label{a19}
For a given finite set of state indices $J$, an  initial condition $P_J(0)$, a given tolerance $\varepsilon>0$ and a finite instant $T>0$, 
\begin{equation}
\mathbbm{1}^\top \bar P_J(T;\sigma)\ge 1-\varepsilon,\quad  \forall \sigma\in  {\mathcal S}_{I}^K.
\end{equation}
\end{assumption}

\noindent Note that Assumption \ref{a19} holds if and only if 
\begin{align*}
1-\varepsilon \le \min_{\sigma\in  {\mathcal S}_{I}^K} &\quad \mathbbm{1}^\top \bar P_J(T;\sigma)\\
 \mbox{s.t.}&\  \dot {\bar P}_J(t;\sigma)\!=\!\left[F_{\sigma(t)}\right]_J\bar P_J(t;\sigma), \  \bar P_J(0)= P_J(0).
 \end{align*}
This problem has the same structure as \eqref{intercept}. Therefore, as   illustrated in Section~\ref{sec:r_tol_sw}, Assumption \ref{a19} can be checked by solving the MILP \eqref{MILP}  for the switched affine system \eqref{eq:lin_fin}  by setting $c=\mathbbm{1}$ and ${\bf M}=\mathbbm{1}$.  Under Assumption \ref{a19}, the following relation between the solutions of \eqref{eq:lin_inf} and \eqref{eq:lin_fin} holds.

\begin{proposition}[FSP for controlled CME] \label{fsp}
If Assumptions \ref{ass:input} and \ref{a19} hold, then for every switching  signal $\sigma\in \mathcal S^K_I $, it holds
\begin{align*}
&P_j(T;\sigma)\ge\bar P_j(T;\sigma),\quad  \forall j\in J \\
&\|P_J(T;\sigma)-\bar P_J(T;\sigma)\|_1\le \varepsilon. \qedhere
\end{align*}
\end{proposition}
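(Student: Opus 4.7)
The plan is a two-step argument. First I would establish the pointwise bound $P_j(T;\sigma)\ge \bar P_j(T;\sigma)$ for $j\in J$; once that is in hand, the $\ell_1$ estimate falls out of Assumption~\ref{a19} combined with conservation of probability for the full controlled CME.

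For the pointwise bound, I would view $P_J(t;\sigma)$ as the restriction to the coordinates indexed by $J$ of the solution to the infinite-dimensional system \eqref{eq:lin_inf}, and split the right-hand side along the partition $(J,J^c)$:
\begin{equation*}
\dot P_J(t;\sigma) \;=\; [F_{\sigma(t)}]_J\,P_J(t;\sigma) \;+\; [F_{\sigma(t)}]_{J,J^c}\,P_{J^c}(t;\sigma).
\end{equation*}
Because $F_{\sigma(t)}$ is a CTMC generator (Metzler, with columns summing to zero) and probabilities satisfy $P_{J^c}(t;\sigma)\ge 0$, the forcing $g(t):=[F_{\sigma(t)}]_{J,J^c}P_{J^c}(t;\sigma)$ is componentwise non-negative. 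Subtracting the reduced system \eqref{eq:lin_fin}, which has the same initial condition $P_J(0)$, the error $e(t):=P_J(t;\sigma)-\bar P_J(t;\sigma)$ satisfies the linear inhomogeneous ODE $\dot e=[F_{\sigma(t)}]_J\,e+g(t)$ with $e(0)=0$. Under Assumption~\ref{ass:input} the matrix $[F_{\sigma(t)}]_J$ is piecewise constant Metzler, so its state transition matrix on each switching subinterval is entrywise non-negative; iterating the variation-of-constants formula across the $K+1$ subintervals yields $e(T)\ge 0$ componentwise, which is the first claim.

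For the $\ell_1$ bound, the non-negativity of $e(T)$ lets us drop the absolute values:
\begin{equation*}
\|P_J(T;\sigma)-\bar P_J(T;\sigma)\|_1 \;=\; \mathbbm{1}^\top P_J(T;\sigma) \;-\; \mathbbm{1}^\top \bar P_J(T;\sigma).
\end{equation*}
The full infinite-dimensional dynamics \eqref{eq:lin_inf} preserves total probability (again because the columns of every $F_i$ sum to zero), so $\mathbbm{1}^\top P_J(T;\sigma)\le \mathbbm{1}^\top P(T;\sigma)=1$, while Assumption~\ref{a19} gives $\mathbbm{1}^\top \bar P_J(T;\sigma)\ge 1-\varepsilon$ uniformly over $\sigma\in\mathcal S_I^K$. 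Subtracting yields the bound $\varepsilon$.

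The main technical care lies in the first step: in the switched setting one cannot invoke a single semigroup, and the forcing $g(t)$ itself depends on the infinite tail $P_{J^c}(t;\sigma)$, so one must be careful to justify the positivity argument without circularity. The clean way to handle this is to note that non-negativity of the full solution $P(t;\sigma)$ (hence of $P_{J^c}(t;\sigma)$ and of $g(t)$) is independent of the truncation and follows from the Metzler/off-diagonal-non-negative structure of the generators $F_i$ together with $P(0)\ge 0$; once that is taken as input, the Metzler property of each $[F_i]_J$ propagates non-negativity of $e(t)$ across the fixed switching grid $0=t_0<\cdots<t_{K+1}=T$ without further obstruction.
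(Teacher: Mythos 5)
Your proof is correct, and the second half (the $\ell_1$ estimate) is essentially the paper's argument verbatim: non-negativity of the error lets you drop absolute values, conservation of probability of the full CME gives $\mathbbm{1}^\top P_J(T;\sigma)\le 1$, and Assumption~\ref{a19} gives $\mathbbm{1}^\top \bar P_J(T;\sigma)\ge 1-\varepsilon$. Where you genuinely diverge is in the pointwise bound $P_j(T;\sigma)\ge\bar P_j(T;\sigma)$. The paper works at the level of matrix exponentials: it lifts the Munsky--Khammash inequality $[\exp(F_i\tau)]_J\ge \exp([F_i]_J\tau)\ge 0$ (valid because each $F_i$ has non-negative off-diagonal entries) and chains it multiplicatively over the $K+1$ switching intervals, using that submatrices of products of non-negative matrices dominate products of submatrices. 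You instead argue at the ODE level: you split the full dynamics along $(J,J^c)$, observe that the coupling term $[F_{\sigma(t)}]_{J,J^c}P_{J^c}(t;\sigma)$ is a non-negative forcing, write the error equation $\dot e=[F_{\sigma(t)}]_J e+g(t)$ with $e(0)=0$, and conclude $e(T)\ge 0$ by variation of constants, using that the transition matrix of a piecewise-constant Metzler matrix is entrywise non-negative. Your route is self-contained (no appeal to the FSP lemma of \cite{fsp}) and would extend without change to generators that vary arbitrarily in time rather than switching on a fixed grid; the paper's route is shorter because it reuses the known finite-state-projection inequality and only has to handle the product structure induced by the switching. Both proofs share the same informal treatment of the infinite-dimensional tail (well-posedness of \eqref{eq:lin_inf} and non-negativity of $P(t;\sigma)$ are taken as given), which you at least flag explicitly, so no gap there beyond what the paper itself assumes.
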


\begin{proof}
This result has  been proven in \cite{fsp} for linear systems. We  extend it here to the case of switched systems with $K$ switchings. Note that for any $i\in\N[1,I]$, $F_i:=F({u^i})$ has non-negative off diagonal elements \cite{fsp}. Hence, using the same argument as in \cite[Theorem 2.1]{fsp} it can be shown that for any index set $J$, and any $\tau\ge0$
$$[{\rm exp}(F_i\tau)]_{J}\ge {\rm exp}([F_i]_{J}\tau)\ge 0, \quad \forall i \in 1,\ldots,I .$$
Consider  an arbitrary switching signal $\sigma\in \mathcal S^K_I$. We  have
\begin{align}
P_J(T;\sigma)&=[\Pi_{k=0}^{K}{\rm exp} (F_{i_k}(t_{k+1}-t_k))\cdot P(0)]_{J}\label{p1}  \\&\ge \Pi_{k=0}^{K}[{\rm exp}(F_{i_k}(t_{k+1}-t_k))]_J\cdot P_J(0)\notag\\
&\ge \Pi_{k=0}^{K}{\rm exp} ([F_{i_k}]_J(t_{k+1}-t_k))\cdot P_J(0)=\bar P_J(T;\sigma).\notag
\end{align}
Moreover, from $1=\sum_{j=1}^\infty P_j(T;\sigma)\ge \sum_{j\in J} P_j(T;\sigma)=\mathbbm{1}^\top  P_J(T;\sigma)$ and Assumption~\ref{a19}, we get
\begin{align}
\mathbbm{1}^\top \bar P_J(T;\sigma)\ge 1-\varepsilon \ge \mathbbm{1}^\top  P_J(T;\sigma) -\varepsilon \label{p2}.
\end{align}
Combining \eqref{p1} and \eqref{p2} yields
$ 0\le \mathbbm{1}^\top  P_J(T;\sigma) - \mathbbm{1}^\top \bar P_J(T;\sigma) \le \varepsilon$, thus $\|P_J(T;\sigma)-\bar P_J(T;\sigma)\|_1\le \varepsilon$.
\end{proof}

\section{Analysis of the reachable set} 
\label{sec:reach}

We here show how the reachability tools of Sections~\ref{sec:r_tol} and~\ref{computing} can be applied to the moment equation and FSP reformulations derived in Sections \ref{sec:th:mom} and \ref{sec:th:fsp},  under different assumptions. Fig. \ref{fig:ch9} presents a conceptual scheme of this section.
\begin{figure}[H]
\begin{center}
\includegraphics[width=0.4\textwidth]{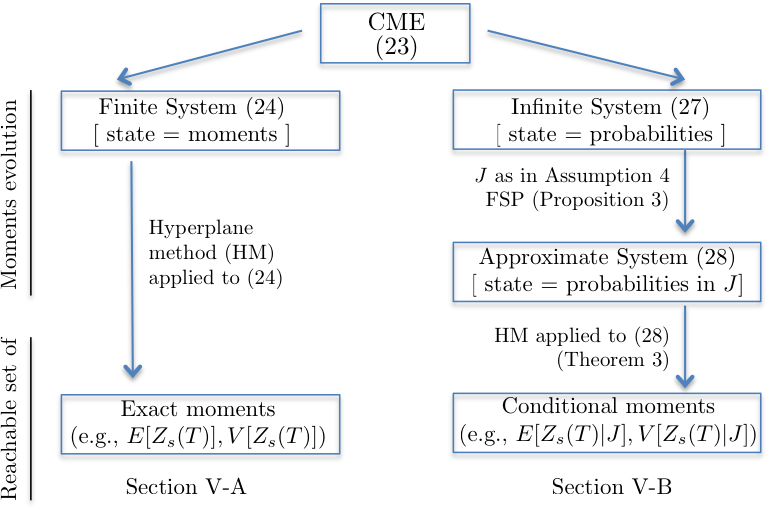}
\end{center}
\vspace{-0.5cm}
\caption{\scriptsize Conceptual scheme for the reachable set analysis of biochemical networks.}
\label{fig:ch9}
\end{figure}

\subsection{Reachable set of networks with affine propensities via moment equations}
\label{sec:r_affine}

The methods developed in   Sections~\ref{sec:r_tol} and \ref{computing}  can be  applied to the moments equations in \eqref{eq:controlled_moments_affine} to approximate the desired projected reachable set. To illustrate the proposed procedure, we distinguish two cases depending on whether the external signal $u(t)$  influences reactions of order zero or one.\

\subsubsection{Linear moments equations}
\label{sec:r_affine_af}

We start by considering the case when all and only the reactions of order zero are controlled, so that $h_r(z)=1 $ for $r\in\Z[1,\Rc]$ and  $h_r(z)={\nu_r'}^\top z$ for $r\in\Z[\Rc+1,R]$. This is  the simplest scenario since the system~of moment equations given in \eqref{eq:controlled_moments_affine} becomes linear
\begin{equation}\label{eq:controlled_moments_linear}
\dot x_{\le 2}(t)=Ax_{\le 2}(t)+Bu(t),
\end{equation}
see \cite[Equations (6) and (7)]{lee2009moment}.
 Consequently, the theoretical results of Section~\ref{sec:r_tol_af} can be  applied to \eqref{eq:controlled_moments_linear} by setting $\sigma(t)\equiv \sigmau(t)$.
 If the external signal $u\equiv\sigma$ satisfies Assumption \ref{ass:linear_input}, both inner and  outer approximations of the reachable set can be computed by using Corollary \ref{cor:linear}.

\subsubsection{Switched affine moments equations}
\label{sec:r_affine_sw}

If  reactions of order one are controlled then the external input $\sigmau(t)$ appears also in the entries of the $A$ matrix and system~\eqref{eq:controlled_moments_affine} is nonlinear. To overcome this issue we exploit Assumption \ref{ass:input_switch}. 
Specifically, let $\sigma(t)$ be the switching signal associated with $u(t)$ as described in Remark \ref{construction}. Then 
\eqref{eq:controlled_moments_affine} can be equivalently rewritten as the switched affine system
\begin{equation}\label{eq:controlled_moments_sww}
\dot x_{\le 2}(t)=A_{\sigma(t)}x_{\le 2}(t)+b_{\sigma(t)},
\end{equation}
 with matrices
$A_i:=A(\sigmau^i),  b_i:=b(\sigmau^i)$, for all $i\in\N[1,I]$.
 Consequently, the theoretical results of Section~\ref{sec:r_tol_sw} can be  applied to \eqref{eq:controlled_moments_sww} and an outer approximation of the reachable set can  be computed by using Corollary~\ref{thm:hm1_sw}.

\subsection{Reachable set of networks with generic propensities via finite state projection}
\label{sec:r_gen}

 If the network contains reactions of order higher than one or if the reactions do not follow the laws of mass action kinetics, then $h_r(z)$ might be non-affine. In such cases, the arguments illustrated in the previous subsection cannot be applied. We here show how  the FSP approximation of the {CME}  derived in Section \ref{sec:th:fsp} 
 can be used to  overcome this problem. 

 Firstly note that, from system~\eqref{eq:lin_inf}, one can compute the evolution of the \textit{uncentered} moments of $Z(t)$, as a linear function of $P(t)$.
  \footnote{The reachable set for the \textit{centered} moments can be immediately computed from the reachable set of the \textit{uncentered} ones, since there is a bijective relation between the set of  \textit{centered} and \textit{uncentered}  moments up to any desired order.}
For example, if we let $z_s^j$ be the amount of species $Z_s$ in the state $z^j$, then  the mean $\mathbb{E}[Z_s]$ of any species $s$  can be obtained as $l^\top   P(t)$,  by setting
$
l:=\left[z_s^1,\ z_s^2,\ \hdots \right]^\top$,
and the second  uncentered moment $\mathbb{E}[Z^2_s]$ can be obtained as $l^\top   P(t)$, by setting
$l:=\left[(z_s^1)^2,\ (z_s^2)^2,\ \hdots \right]^\top.$
Consequently the desired projected reachable set coincides with the output reachable set of the infinite dimensional linear  switched system~\eqref{eq:lin_inf} with linear output 
\begin{align}\label{cm}
y(t)=\begin{bmatrix} l^{1},\ \  l^{2} \end{bmatrix}^\top P(t),
\end{align}
where $l^1$ and $l^2$ are the infinite vectors associated with any desired pair of moments. Note that $l^1$ and $l^2$ are non-negative.

\smallskip 

\textbf{Example \ref{ex:gene} (cont.)}\textit{
With the ordering introduced at the beginning of the section, the uncentered protein moments up to order two  can be computed as the output of \eqref{eq:lin_inf} by setting
\begin{equation}
\begin{aligned}
l^1&=\left[\begin{array}{ccccccccc}0 & 0 & 1& 0&1 & 2 & 0& 1& \ldots\ \end{array}\right]^\top,\\
\hspace{1cm} l^2&=\left[\begin{array}{ccccccccc}0 & 0 & 1& 0&1 & 4 & 0& 1& \ldots \ \end{array}\right]^\top. 
\end{aligned}
\end{equation}}

 Let $l_j^{1}$ and $l_j^{2} $ be the $j$-th components of the vectors $l^{1}$ and $l^{2}$, respectively, as defined in \eqref{cm}. For a given species of interest $s$ and set $J$, we denote by
{\small \begin{equation}
y_1(t;\sigma)\!:=\!\frac{\sum_{j\in J} l_j^{1}\cdot P_j(t;\sigma)}{\sum_{j\in J} P_j(t;\sigma)}, \
y_2(t;\sigma)\!:=\!\frac{\sum_{j\in J} l_j^{2}\cdot P_j(t;\sigma)}{\sum_{j\in J} P_j(t;\sigma)}  \label{output_conditional}
\end{equation}}
the moments associated with $l^1$ and $l^2$ \textit{conditioned on the fact that $Z(t)$ is in $J$ and the switching  signal  $\sigma$ is applied}. For example if one is interested in the mean and second order moment of a specific species  $Z_s(t)$ we get $y_1(t;\sigma)=\mathbb{E}\left[Z_s(t) \mid Z(t)\in J, \sigma(\cdot)\right] $ and $y_2(t;\sigma)=\mathbb{E}\left[Z^2_s(t) \mid Z(t)\in J,\sigma(\cdot) \right]$.
The aim of this section is to obtain an outer approximation of the output reachable set of the infinite system~\eqref{eq:lin_inf} with the nonlinear output \eqref{output_conditional}, by using  computations involving only the finite dimensional system~ \eqref{eq:lin_fin}. 
To this end, we define the two entries of the  output of the finite dimensional system~as
\begin{equation}
\begin{array}{l}
\bar y_1(t;\sigma):={\sum_{j\in J} l_j^{1}\cdot \bar P_j(t;\sigma)}=:(\bar l^1)^\top \bar P_J(t;\sigma) \cr
\bar y_2(t;\sigma):={\sum_{j\in J} l_j^{2}\cdot \bar P_j(t;\sigma)}=:(\bar l^2)^\top \bar P_J(t;\sigma).  
\end{array}\label{output_finite}
\end{equation}
\begin{theorem}
\label{krogh_inf}
Suppose Assumptions \ref{ass:input_switch} and \ref{a19} hold. Let ${\mathcal{R}}_T^y(x_0)$  be the output reachable set at time $T>0$ of system~\eqref{eq:lin_inf} with output \eqref{output_conditional}.
Choose $D$ values $\gamma^d\in\mathbb{R}$ and set  $c^d:=(\bar l^2)-\gamma^d (\bar l^1) \in\mathbb{R}^n$, with $\bar l^1,\bar l^2$  as in \eqref{output_finite}.
Set 
\begin{align*}
\mathcal{H}^{y}_{T}(\gamma^d)&:=\{ w \in\mathbb{R}^2 \mid w_2 \le \gamma^d w_1+ \bar \intercept_T(c^d) +\delta(\gamma^d)\},
\end{align*}
where $\bar \intercept_T(c^d)$ is the constant that makes the hyperplane  
$\textup{H}_T(c^d)$  in \eqref{eq:half_space} tangent to the reachable set of the finite  system~ \eqref{eq:lin_fin} (i.e. $\bar \intercept_T(c^d)$ can be computed as in \eqref{MILP})
and 
$$\textstyle \delta(\gamma^d):=\frac{2\varepsilon }{1-\varepsilon}\cdot (\max\{0,-\gamma^d\}\cdot \|\bar l^1\|_\infty+\|\bar l^2\|_\infty),$$
with $\varepsilon$ as in  Assumption \ref{a19}. Then the set 
$
\mathcal{R}_{T}^{y,out}(x_0):=\cap_{d=1}^D\{\mathcal{H}_{T}^{y}(\gamma^d)\} 
$
is an outer approximation of $\mathcal{R}_{T}^y(x_0)$. 
\end{theorem}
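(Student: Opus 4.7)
The plan is to fix an arbitrary switching signal $\sigma \in \mathcal{S}_I^K$ and show that the corresponding output $y(T;\sigma)$ of the infinite--dimensional system lies in every half-space $\mathcal{H}_T^y(\gamma^d)$. Using the definition of $c^d$ in the theorem statement and of the conditional moments in \eqref{output_conditional},
\begin{equation*}
y_2(T;\sigma)-\gamma^d\,y_1(T;\sigma)=\frac{(c^d)^\top P_J(T;\sigma)}{\mathbbm{1}^\top P_J(T;\sigma)},
\end{equation*}
so membership in $\mathcal{H}_T^y(\gamma^d)$ reduces to upper bounding this ratio by $\bar v_T(c^d)+\delta(\gamma^d)$ uniformly in $\sigma$.

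For the numerator, I would apply Proposition~\ref{fsp} and write $P_J(T;\sigma)=\bar P_J(T;\sigma)+\Delta$ with $\Delta\ge 0$ componentwise and $\|\Delta\|_1\le\varepsilon$. The contribution $(c^d)^\top\bar P_J(T;\sigma)$ is bounded by $\bar v_T(c^d)$, by definition of the tangent constant obtained from the MILP \eqref{MILP} applied to the switched affine system \eqref{eq:lin_fin}. The error term satisfies $(c^d)^\top\Delta\le (\max_j c^d_j)_+\,\varepsilon$, where the positive part absorbs the trivial case in which $\max_j c^d_j<0$. Combining, $(c^d)^\top P_J(T;\sigma)\le \bar v_T(c^d)+(\max_j c^d_j)_+\,\varepsilon$. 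For the denominator, Proposition~\ref{fsp} combined with Assumption~\ref{a19} gives $\mathbbm{1}^\top P_J(T;\sigma)\ge \mathbbm{1}^\top\bar P_J(T;\sigma)\ge 1-\varepsilon$, and of course $\mathbbm{1}^\top P_J(T;\sigma)\le 1$.

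To assemble the ratio bound I would use the identity
\begin{equation*}
\frac{(c^d)^\top P_J}{\mathbbm{1}^\top P_J}-\bar v_T(c^d)=\frac{[(c^d)^\top P_J-\bar v_T(c^d)]+\bar v_T(c^d)\bigl(1-\mathbbm{1}^\top P_J\bigr)}{\mathbbm{1}^\top P_J}.
\end{equation*}
The first bracket in the numerator is $\le (\max_j c^d_j)_+\,\varepsilon$ by the estimate above; the second is $\le (\bar v_T(c^d))_+\,\varepsilon$ using $1-\mathbbm{1}^\top P_J\in[0,\varepsilon]$ (trivial in the sign case $\bar v_T(c^d)<0$). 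The crude estimate $\bar v_T(c^d)\le(\max_j c^d_j)_+$ (immediate from $\bar P_J\ge 0$ and $\mathbbm{1}^\top\bar P_J\le 1$) then gives the numerator bound $2(\max_j c^d_j)_+\,\varepsilon$. Dividing by $\mathbbm{1}^\top P_J\ge 1-\varepsilon$ produces $\frac{2\varepsilon}{1-\varepsilon}(\max_j c^d_j)_+$. Finally, since the components of $\bar l^1$ and $\bar l^2$ are non-negative species counts, the coordinatewise bound $(\max_j c^d_j)_+\le \|\bar l^2\|_\infty+\max\{0,-\gamma^d\}\|\bar l^1\|_\infty$ reproduces exactly $\delta(\gamma^d)$.

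The main obstacle I foresee is the asymmetry between the normalised quantity $y_2-\gamma^d y_1$ (a conditional expectation) and the tangent constant $\bar v_T(c^d)$, which is defined for the \emph{unnormalised} finite output $(c^d)^\top\bar P_J$. The careful bookkeeping of two independent error sources, namely the FSP mass loss appearing in the denominator and the perturbation $\Delta$ appearing in the numerator, together with the sign case analysis on $\bar v_T(c^d)$, is exactly where the factor $2$ and the $\frac{1}{1-\varepsilon}$ in the definition of $\delta(\gamma^d)$ originate; the remainder of the argument is a direct combination of Proposition~\ref{fsp} and the hyperplane interpretation of the MILP~\eqref{MILP}.
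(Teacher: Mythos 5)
Your proposal is correct and it recovers exactly the paper's constant $\delta(\gamma^d)$, but the bookkeeping is genuinely different from the paper's proof. The paper never forms the ratio you work with: it first compares each conditional moment of the infinite system with the corresponding \emph{unnormalised} output of the FSP system, proving the per-coordinate estimates $\bar y_i \le y_i \le \bar y_i + \|\bar l^i\|_\infty \tfrac{2\varepsilon}{1-\varepsilon}$ for $i=1,2$ (the factor $\tfrac{2\varepsilon}{1-\varepsilon}$ arising from $\tfrac{\varepsilon}{1-\varepsilon}+\varepsilon\le\tfrac{2\varepsilon}{1-\varepsilon}$), and then inserts these into the tangency inequality $\bar y_2 \le \gamma^d \bar y_1 + \bar \intercept_T(c^d)$ with a case split on the sign of $\gamma^d$. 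You instead aggregate along the direction $c^d$ from the outset, writing $y_2-\gamma^d y_1=\frac{(c^d)^\top P_J}{\mathbbm{1}^\top P_J}$, and your exact identity cleanly separates the two error sources (the nonnegative perturbation $\Delta$ with $\|\Delta\|_1\le\varepsilon$ in the numerator, the mass defect $1-\mathbbm{1}^\top P_J\le\varepsilon$ multiplying $\bar \intercept_T(c^d)$), transferring the case analysis to positive parts of $\max_j c^d_j$ and $\bar \intercept_T(c^d)$ rather than to the sign of $\gamma^d$; the crude estimate $\bar \intercept_T(c^d)\le(\max_j c^d_j)_+$ then closes the loop. Both arguments rest on the same ingredients, namely Proposition \ref{fsp}, Assumption \ref{a19}, the non-negativity of $\bar l^1,\bar l^2$, and the MILP interpretation $\bar \intercept_T(c^d)=\max_{\sigma\in\mathcal{S}_I^K}(c^d)^\top\bar P_J(T;\sigma)$, so neither is more demanding; your version is more direct for the combined direction and extends verbatim to bounding any single linear functional of the conditional distribution, while the paper's version produces the reusable and more interpretable per-moment error bounds $|y_i-\bar y_i|$. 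Two small points to tidy up: when you divide the identity by $\mathbbm{1}^\top P_J\ge 1-\varepsilon$, note explicitly that the inequality is trivial when the numerator is negative (the target bound being non-negative), and justify $\bar P_J(T;\sigma)\ge 0$, $\mathbbm{1}^\top\bar P_J(T;\sigma)\le 1$ either from the substochasticity of $[F_i]_J$ or, more simply, from $0\le\bar P_j\le P_j$ and $\sum_{j\in J}P_j\le 1$ via Proposition \ref{fsp}; both are one-line fixes.
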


\begin{proof}
Firstly note that if the external signal $u$ satisfies Assumption~\ref{ass:input_switch}  then the corresponding switching signal $\sigma(t)$ (constructed as in Remark \ref{construction}) satisfies Assumption \ref{ass:input}. Let $\bar{\mathcal{R}}_{T}^{y}(x_0)$ be the output reachable set of the finite dimensional system~\eqref{eq:lin_fin}  with output \eqref{output_finite}.  Proposition \ref{tangent2} guarantees that for any direction $c^d$  the constant  $\bar \intercept_T(c^d)$ that makes 
\begin{align*}
\bar{\mathcal{H}}^{y}_{T}(\gamma^d)&:=\{ w \in\mathbb{R}^2 \mid w_2 \le \gamma^d w_1+ \bar \intercept_T(c^d) \}
\end{align*}
tangent to $\bar{\mathcal{R}}_{T}^{y}(x_0)$ can be computed by solving the MILP~\eqref{MILP} for system \eqref{eq:lin_fin}. The main idea of the proof is to show that if we  shift  the halfspace $\bar{\mathcal{H}}^{y}_{T}(\gamma^d)$ by a suitably defined constant $\delta(\gamma^d)$ we can guarantee that the original reachable set ${\mathcal{R}}_{T}^{y}(x_0)$ is a subset of the shifted halfspace ${\mathcal{H}}^{y}_{T}(\gamma^d)$  defined in the statement. The result then follows  since $\mathcal{R}_{T}^{y,out}(x_0)$ is defined as the intersection of hyperspaces containing $\mathcal{R}_{T}^{y}(x_0)$.

To derive the constant $\delta(\gamma^d)$ we start by  focusing on the first component of the output  and for simplicity we will omit the dependence on $(T;\sigma)$ in  $P_j,\bar P_j,y$ and $\bar y$.
Take  any switching   signal $\sigma \in  \mathcal S^K_I $.    By taking into account the following conditions: (1) $l_j^{1}\ge0$ for all $j\in J$; (2) $P_j\ge \bar P_j$ for all $j\in J$, due to Proposition \ref{fsp}, and (3) $\sum_{j\in J} P_j \le 1$, we get $y_1\ge \bar y_1$. Consequently, at time $t=T$ we have
{
\begin{align*}
|y_1-\bar y_1|&=y_1-\bar y_1 =\textstyle\frac{\sum_{j\in J} l_j^{1}\cdot P_j}{\sum_{j\in J} P_j}- {\sum_{j\in J} l_j^{1}\cdot \bar P_j} \\& \le \textstyle  \frac{\sum_{j\in J} l_j^{1}\cdot P_j}{1-\varepsilon}- {\sum_{j\in J} l_j^{1}\cdot \bar P_j}\\
&=\textstyle \left(1+\frac{\varepsilon}{1-\varepsilon}\right) \sum_{j\in J} l_j^{1}\cdot P_j- {\sum_{j\in J} l_j^{1}\cdot \bar P_j}\\
&= \textstyle \frac{\varepsilon}{1-\varepsilon}\sum_{j\in J} l_j^{1}\cdot P_j + {\sum_{j\in J} l_j^{1}\cdot (P_j-\bar P_j)}\\
&\textstyle \le  \|\bar l^1\|_\infty \left(\frac{\varepsilon}{1-\varepsilon}\sum_{j\in J}  P_j + {\sum_{j\in J} (P_j-\bar P_j)}\right)\\
&\textstyle \le  \|\bar l^1\|_\infty \left(\frac{\varepsilon}{1-\varepsilon}+ \|P_J-\bar P_J\|_1\right) \textstyle  \le   \|\bar l^1\|_\infty\frac{2\varepsilon }{1-\varepsilon}, 
\end{align*}
where we used $\sum_{j\in J} P_j\ge \sum_{j\in J}\bar  P_j\ge 1-\varepsilon$ (due to Assumption \ref{a19}), and $P_j\ge \bar P_j, \|P_J-\bar P_J\|_1\le \varepsilon$ (following from Proposition \ref{fsp}).}
 To summarize,
$
\bar y_1\le y_1\le \bar y_1+  \|\bar l^1\|_\infty\frac{2\varepsilon }{1-\varepsilon}. 
$
Similarly, it can be proven that
$
\bar y_2\le y_2\le \bar y_2+  \|\bar l^2\|_\infty\frac{2\varepsilon }{1-\varepsilon}. 
$
Consider any pair $( y_1, y_2)\in {\mathcal{R}}_{T}^{y}(x_0)$ and the associated pair  $(\bar y_1,\bar y_2)\in \bar{\mathcal{R}}_{T}^{y}(x_0)$ (i.e. the two output pairs obtained from \eqref{eq:lin_inf} and \eqref{eq:lin_fin} when the same $\sigma$ is applied). 
Note that $(\bar y_1,\bar y_2)\in \bar{\mathcal{R}}_{T}^{y}(x_0)$ implies $(\bar y_1, \bar y_2)\in \bar{\mathcal{H}}^{y}_{T}(\gamma^d)$ for any  $\gamma^d.$
The previous relations  then  imply that if $\gamma^d\ge0$, 
\begin{align*}
y_2&\textstyle \le \bar y_2+  \|\bar l^2\|_\infty\frac{2\varepsilon }{1-\varepsilon} \le \gamma^d\bar y_1+ \bar \intercept_T(c^d) +  \|\bar l^2\|_\infty\frac{2\varepsilon }{1-\varepsilon} \\ &\textstyle \le \gamma^d y_1+ \bar \intercept_T(c^d) +  \|\bar l^2\|_\infty\frac{2\varepsilon }{1-\varepsilon} =  \gamma^d y_1+ \bar \intercept_T(c^d)+\delta(\gamma^d).
\end{align*} 
On the other hand, when $\gamma^d<0$ 
\begin{align*}
y_2&\textstyle\le \bar y_2+  \|\bar l^2\|_\infty\frac{2\varepsilon }{1-\varepsilon} \le \gamma^d\bar y_1+ \bar \intercept_T(c^d) +  \|\bar l^2\|_\infty\frac{2\varepsilon }{1-\varepsilon} \textstyle\\&\le \textstyle \gamma^d y_1+ \bar \intercept_T(c^d) + ( \|\bar l^2\|_\infty-\gamma^d  \|\bar l^1\|_\infty)\frac{2\varepsilon }{1-\varepsilon}   \\&=  \gamma^d y_1+ \bar \intercept_T(c^d)+\delta(\gamma^d).
\end{align*} 
Therefore for every  signal $\sigma$ and every $\gamma^d$ it holds
$
y_2(T;\sigma)\le  \gamma^d y_1(T;\sigma)+ \bar \intercept_T(c^d)+\delta(\gamma^d)
$
and consequently $[y_1(T;\sigma), y_2(T;\sigma)]^\top \in {\mathcal{H}}^{y}_{T}(\gamma^d)$.
\end{proof}

\section{ Analysis of single cell realizations}
\label{sec:individual}
The previous analysis   focused on characterising what combinations of moments of the stochastic biochemical reaction network are achievable by using the available external input. In this section, we change perspective and instead of looking at population properties  we focus on single cell trajectories. Specifically, we are interested in characterising the probability that  a single realization of the stochastic process will satisfy a specific property at the final time $T$ (e.g. the number of copies of a certain species  is higher/lower than a certain threshold) when starting from an initial condition $P(0)$. Note that we can start either deterministically from a given state $z^i$ (by setting $P(0)=e_i$) or  stochastically from any state according to a generic vector of probabilities $P(0)$. To  define the problem let us call $\mathcal{T}$ the target set, that is, the set of all indices $i$ associated with a state $z^i$ in the Markov chain \eqref{eq:lin_inf_s} that satisfies the desired property. Note that this set might be of infinite size. We restrict our analysis to external signals satisfying Assumption \ref{ass:input_switch}, so that we can map the external signal $u$ to the switching  signal $\sigma$, as detailed in Remark~\ref{construction}. For a fixed signal $\sigma$ the solution of \eqref{eq:lin_inf} immediately allows one to compute the probability that the state at time $T$ belongs to $\mathcal{T}$, and thus has the desired property, as 
$ \mathcal{P}_\mathcal{T}(\sigma):= \mathbbm{1}_\mathcal{T}^\top P(T;\sigma) $
where $\mathbbm{1}_\mathcal{T}$ is an infinite vector that has the $i$th component equal to $1$ if $i\in\mathcal{T}$ and $0$ otherwise. Our objective is to select the switching  signal $\sigma(t)$ (and thus the external signal $u(t)$) that maximizes the probability $ \mathcal{P}_\mathcal{T}(\sigma)$.\footnote{Note that  one can use the same tools to maximize the probability of  \textit{avoiding} a given set $\mathcal{D}$ by maximizing the probability of being in $\mathcal{T}=\mathcal{D}^c$.} That is, we aim at solving
\begin{align}\label{max_prob}
\mathcal{P}_\mathcal{T}^\star: = \max_{\sigma\in\mathcal{S}^K_I}\ \mathcal{P}_\mathcal{T}(\sigma), \qquad \sigma^\star:=  \arg\max_{\sigma\in\mathcal{S}^K_I}\ \mathcal{P}_\mathcal{T}(\sigma),
\end{align}
where $I$ is the cardinality of $\Sigma^d$ as by Remark~\ref{construction}. Note that  $\mathcal{P}_\mathcal{T}(\sigma)$ in \eqref{max_prob} is computed according to $P(T;\sigma)$ which is an infinite dimentional vector. In the next theorem we show how to overcome this issue and approximately solve  \eqref{max_prob} by using the FSP approach of Proposition \ref{fsp} and the reformulation as MILP given in Proposition~\ref{tangent2}. 
To this end, let
\begin{align}\label{max_finite}
 \bar{\sigma}^\star: =\arg\max_{\sigma\in\mathcal{S}^K_I} \bar{\mathcal{P}}_\mathcal{T}(\sigma).
\end{align}
where $\bar{\mathcal{P}}_\mathcal{T}(\sigma):= \bar{\mathbbm{1}}_\mathcal{T}^\top \bar P_J(T;\sigma)$ is the probability that the final state of the \textit{reduced} Markov chain
\eqref{eq:lin_fin}  belongs to $\mathcal{T}\cap J$ at time $T$ given the switching signal $\sigma$, and
$\bar{\mathbbm{1}}_\mathcal{T}$ is a vector of size $|J|$ that has $1$    in the positions  corresponding to states of $J$ that belong also to $\mathcal{T}$, and $0$ otherwise.
\begin{theorem}
Suppose that Assumptions \ref{ass:input_switch} and \ref{a19} hold. Then 
$$\mathcal{P}_\mathcal{T}( \bar{\sigma}^\star ) \ge \mathcal{P}_\mathcal{T}^\star - 2 \varepsilon.$$
Moreover \eqref{max_finite} can be solved by solving the MILP in \eqref{MILP} for  system \eqref{eq:lin_fin} with $c=\bar{\mathbbm{1}}_\mathcal{T}$ and $\bf{M}=\mathbbm{1}$.
\end{theorem}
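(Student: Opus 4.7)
The plan is to prove the theorem by (i) bounding the gap between the full-chain target probability $\mathcal{P}_\mathcal{T}(\sigma)$ and its truncated counterpart $\bar{\mathcal{P}}_\mathcal{T}(\sigma)$ uniformly over all admissible switching signals $\sigma$, and then (ii) combining this uniform bound with the definition of $\bar\sigma^\star$ as a maximizer of $\bar{\mathcal{P}}_\mathcal{T}$ through a standard ``almost-optimal transfer'' argument. The MILP claim is then a direct instantiation of Proposition~\ref{tangent2}.

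For step (i), I would split the target sum into the parts inside and outside $J$, writing
$\mathcal{P}_\mathcal{T}(\sigma) - \bar{\mathcal{P}}_\mathcal{T}(\sigma)=\sum_{j\in \mathcal{T}\cap J}\bigl(P_j(T;\sigma)-\bar P_j(T;\sigma)\bigr)+\sum_{j\in \mathcal{T}\setminus J}P_j(T;\sigma).$
By Proposition~\ref{fsp}, both summands are non-negative, so $\mathcal{P}_\mathcal{T}(\sigma)\ge \bar{\mathcal{P}}_\mathcal{T}(\sigma)$. The first sum is bounded above by $\|P_J(T;\sigma)-\bar P_J(T;\sigma)\|_1\le \varepsilon$ (again Proposition~\ref{fsp}), while the second is $\sum_{j\notin J}P_j(T;\sigma)=1-\sum_{j\in J}P_j(T;\sigma)\le 1-\mathbbm{1}^\top \bar P_J(T;\sigma)\le \varepsilon$ by Proposition~\ref{fsp} combined with Assumption~\ref{a19}. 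Hence for every $\sigma\in\mathcal{S}^K_I$,
$0\le \mathcal{P}_\mathcal{T}(\sigma)-\bar{\mathcal{P}}_\mathcal{T}(\sigma)\le 2\varepsilon.$

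For step (ii), let $\sigma^\star$ achieve $\mathcal{P}_\mathcal{T}^\star$. Then
$\mathcal{P}_\mathcal{T}(\bar\sigma^\star)\ge \bar{\mathcal{P}}_\mathcal{T}(\bar\sigma^\star)\ge \bar{\mathcal{P}}_\mathcal{T}(\sigma^\star)\ge \mathcal{P}_\mathcal{T}(\sigma^\star)-2\varepsilon=\mathcal{P}_\mathcal{T}^\star-2\varepsilon,$
where the first inequality is the lower part of the bound in step (i), the second uses optimality of $\bar\sigma^\star$ for $\bar{\mathcal{P}}_\mathcal{T}$, and the third is the upper part of the bound applied to $\sigma^\star$. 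This gives the claimed inequality.

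For the MILP claim, observe that under Assumption~\ref{ass:input_switch} the correspondence of Remark~\ref{construction} turns \eqref{eq:lin_fin} into a switched affine system of the form \eqref{eq:swi} (with zero affine term) satisfying Assumption~\ref{ass:input}. The objective $\bar{\mathcal{P}}_\mathcal{T}(\sigma)=\bar{\mathbbm{1}}_\mathcal{T}^\top \bar P_J(T;\sigma)$ is then exactly of the form $c^\top x(T)$ appearing in \eqref{intercept} with $c=\bar{\mathbbm{1}}_\mathcal{T}$, so Proposition~\ref{tangent2} applies. The only technicality is the choice of the big-$M$: since $\bar P_J(t;\sigma)\ge 0$ componentwise and $\mathbbm{1}^\top \bar P_J(t;\sigma)\le 1$ for all $t$ (the reduced matrices are substochastic and the initial probability mass is at most one), we have $|\bar P_J(t;\sigma)|\le \mathbbm{1}$ componentwise for every admissible $\sigma$ and every $t$, so $\mathbf{M}=\mathbbm{1}$ is a valid uniform bound. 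The main obstacle, if any, is just making sure these componentwise bounds on the substochastic trajectory are justified cleanly; beyond that, the argument is a short chain of elementary inequalities leveraging Proposition~\ref{fsp} and Assumption~\ref{a19}.
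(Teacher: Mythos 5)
Your proposal is correct and follows essentially the same route as the paper: a uniform sandwich bound $\bar{\mathcal{P}}_\mathcal{T}(\sigma)\le \mathcal{P}_\mathcal{T}(\sigma)\le \bar{\mathcal{P}}_\mathcal{T}(\sigma)+2\varepsilon$ obtained from Proposition~\ref{fsp} and Assumption~\ref{a19} (the paper writes it as two inequality chains rather than decomposing the difference, but the ingredients are identical), followed by the same optimality-transfer chain through $\sigma^\star$ and $\bar\sigma^\star$. The MILP claim is also handled as in the paper, invoking Proposition~\ref{tangent2} with $c=\bar{\mathbbm{1}}_\mathcal{T}$ and noting that the nonnegative, substochastic truncated trajectory justifies $\mathbf{M}=\mathbbm{1}$.
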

\begin{proof}
Under Assumption \ref{ass:input_switch} and \ref{a19}, for any set $\mathcal{T}$ and any  signal $\sigma$, we get
\begin{align*}\mathbbm{1}_\mathcal{T}^\top P&\textstyle = \sum_{i\in\mathcal{T}}  P_i \le\!   \sum_{i\in\mathcal{T}\cap J}   P_i 
 + \sum_{i\notin  J}   P_i  \le   \sum_{i\in\mathcal{T}\cap J}   P_i  \!+\! \varepsilon \\
 &\textstyle \le   \sum_{i\in\mathcal{T}\cap J}   \bar{P}_i + \sum_{i\in\mathcal{T}\cap J}  | P_i -\bar{P}_i|
 + \varepsilon \\
 &\textstyle \le   \sum_{i\in\mathcal{T}\cap J}   \bar{P}_i + \| P_J -\bar{P}_J\|_1+\varepsilon =  \bar{\mathbbm{1}}_\mathcal{T}^\top    \bar{P} + 2 \varepsilon,
 \end{align*}
and
$
\mathbbm{1}_\mathcal{T}^\top P= \sum_{i\in\mathcal{T}}  P_i  \ge  \sum_{i\in\mathcal{T}\cap J}  P_i   \ge  \sum_{i\in\mathcal{T}\cap J}  \bar{P}_i =  \bar{\mathbbm{1}}_\mathcal{T}^\top    \bar{P},
$
 where we used Assumption \ref{a19} and Proposition \ref{fsp} and we omitted $(T;\sigma)$ for simplicity.  To sum up, for each $\sigma$,
 $$  \bar{\mathcal{P}}_\mathcal{T}(\sigma) \le  \mathcal{P}_\mathcal{T}(\sigma) \le  \bar{\mathcal{P}}_\mathcal{T}(\sigma)  + 2 \varepsilon. $$
By imposing $\sigma=\sigma^\star$ we get
$ \mathcal{P}^\star_\mathcal{T}=  \mathcal{P}_\mathcal{T}(\sigma^\star) \le  \bar{\mathcal{P}}_\mathcal{T}(\sigma^\star)  + 2\varepsilon \le  \bar{\mathcal{P}}_\mathcal{T}(\bar{\sigma}^\star)  + 2\varepsilon .$
By imposing $\sigma=\bar{\sigma}^\star$ we get
$ \bar{\mathcal{P}}_\mathcal{T}(\bar{\sigma}^\star) \le  \mathcal{P}_\mathcal{T}(\bar{\sigma}^\star).$
Combining the last two inequalities we get the desired bound. The last result can be proven as in Proposition \ref{tangent2}. Note that $\bar P$ is a vector of probabilities, hence we can set $\bf{M}=\mathbbm{1}$.
\end{proof}

\section{The gene expression network case study}
\label{gene}
To illustrate our method we consider again the gene expression model of Example \ref{ex:gene} and determine what combinations of the protein mean and variance are achievable starting from the zero state, under different assumptions on the  external signal. 

\subsection{ Single input}
 \label{gene_one}
Consider the gene expression model  with  one external signal and reactions following the mass action kinetics, as described in Example \ref{ex:gene1}. In this case, the moments equations are linear and the protein mean and variance can  be obtained by assuming as output matrix for the linear system~\eqref{eq:gene_1}
 \begin{align*}
L&:=\left[\begin{array}{ccccc}0 & 1 & 0  &0& 0\\ 0 &  0&0 & 0 & 1  \ \end{array}\right].
\end{align*}
Depending on the experimental setup, the external signal $\sigmau(t)$  may take values in the set $\Sigma^d:=\{0,1\}$, if the  input is of the ON-OFF type \cite{milias2011silico, parise2015guides, Olson2014a,menolascina2011analysis},  or  in the interval $\Sigma^c:=\left[0,1\right]$, if the  input is continuous \cite{Batt2012}. Corollary~\ref{cor:linear} guarantees the validity of the following results both for $\Sigma^d$ and $\Sigma^c$. 
The problem of computing an outer approximation of the reachable set  of this system~was  studied in \cite{parise2014reachable} using ad hoc methods. In Fig.~\ref{fig:silico} we compare the outer approximation obtained  therein (magenta dashed/dotted line) with the inner (solid red) and outer (dashed blue) approximations that we obtained using the methods for linear moment equations of Section \ref{sec:r_affine_af}. 
We used the parameters 
$
k_r=0.0236,   \gamma_r= 0.0503,  k_p= 0.18, \gamma_p=  0.0121
$
(all in units of  min$^{-1}$)  and set $T=360$ min. Figure \ref{fig:silico}  shows that the outer approximation computed using the hyperplane method is more accurate than the one previously obtained in the literature. Moreover, since inner and outer approximations practically coincide, this method allows one to effectively recover the  reachable set. 
\begin{figure}[h]
\begin{center}
\includegraphics[width=0.45\textwidth]{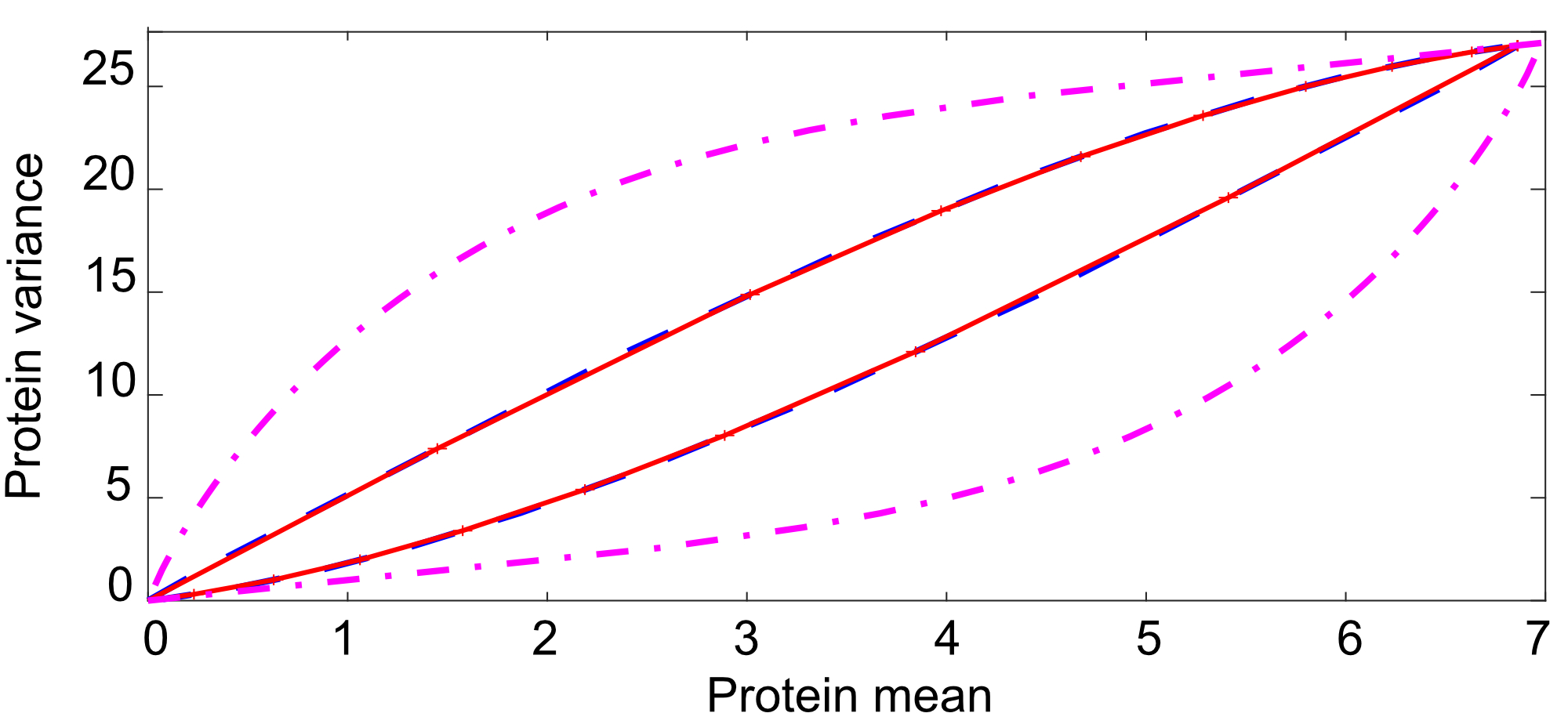}
\end{center}\vspace{-0.5cm} 
\caption{Comparison of the inner (red solid) and outer (blue dashed) approximations of the reachable set for the protein mean and variance, according to model \protect \eqref{eq:gene_1}, computed using the hyperplane method and the outer approximation computed according to \cite{parise2014reachable}  (magenta dashed/dotted). }
\label{fig:silico}
\end{figure}

\subsection{Single input and saturation}
\label{sat}

As second case study we consider again  Example \ref{ex:gene1}, but we now assume that not all the reactions follow the laws of mass action kinetics. Specifically, we are interested in investigating how the reachable set changes if we assume that the number of ribosomes in the cell is limited and consequently we impose a saturation to the translation propensity. Following \cite{brockmann2007posttranscriptional}, we assume that the translation rate follows the Michaelis-Menten kinetics so that 
$$\textstyle \alpha_3(k_p,z)= \tilde{k}_p \cdot \frac{a \cdot m}{b+a \cdot m} \ \mbox{ instead of }\  \alpha_3(k_p,z)= k_p \cdot m.$$
For the simulations we impose $\tilde{k}_p=0.7885, b=0.06, a=0.02$, so that the maximum reachable protein mean is the same as in the case without saturation analysed in the previous subsection. The corresponding propensity function is illustrated  in Fig.  \ref{fig:sat}a).
All the other propensities are assumed as in Section~\ref{gene_one}. Note that in this case the propensities are not affine. Consequently, we estimate the reachable set by  using the FSP approach derived in Theorem~\ref{krogh_inf}. Specifically we consider as set $J$ the indices corresponding to states with less than $6$ mRNA copies and $40$ protein copies. By assuming $T=360$  min and that $u$ can switch any $30$ minutes in the set $\Sigma^d=\{0,1\}$, we obtain an error $\varepsilon=2.84\cdot 10^{-4}$. Fig. \ref{fig:sat}b) shows the comparison of the reachable sets obtained for the cases with and without saturation. From this plot it emerges that, for the chosen values of parameters, saturation leads to a decrease of variability in the population.

\begin{figure}[h]
\quad \textbf{a)}
\vspace{-0.7cm}
\begin{center}
\hspace{0.35cm} \includegraphics[width=0.44\textwidth]{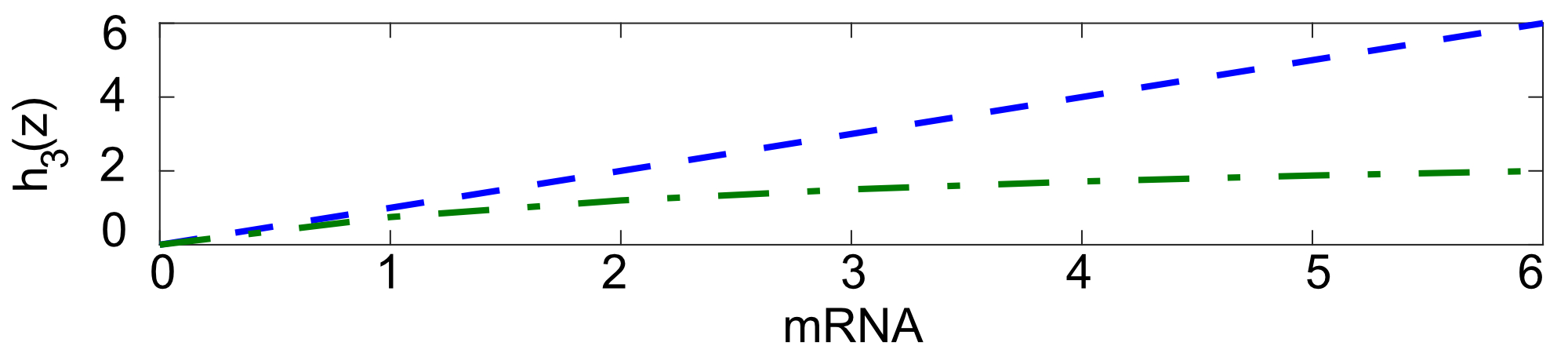}
\end{center}
\quad \textbf{b)}
\vspace{-0.9cm}
\begin{center}
\hspace{0.3cm} \includegraphics[width=0.45\textwidth]{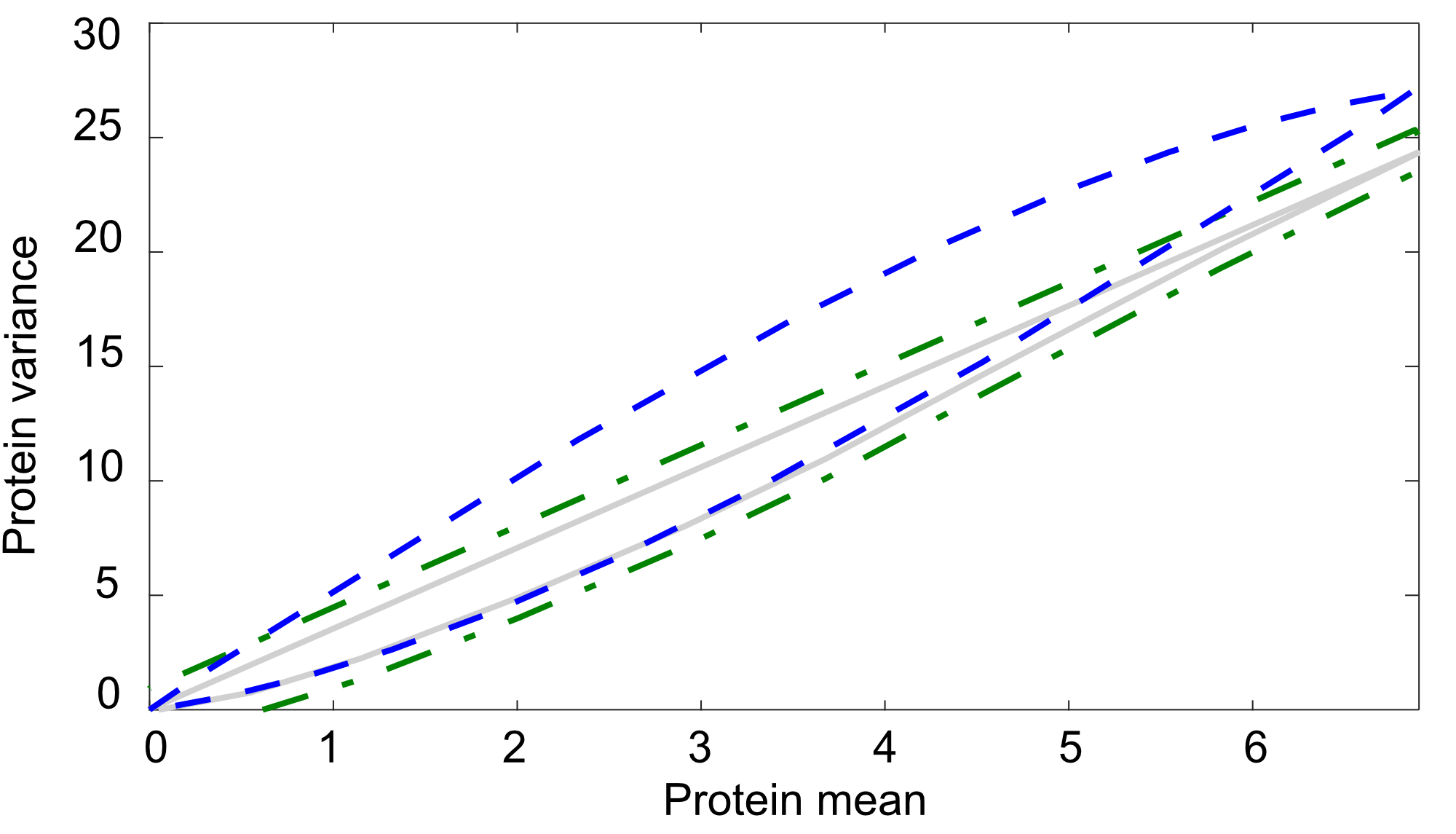}
\end{center}
\vspace{-0.5cm} 
\caption{\scriptsize Comparison of  the reachable set for the protein mean and variance, according to the model \protect in Example 2, when all the reactions follows the mass action kinetics (as in Fig \ref{fig:silico}) and when the translation is saturated. \textit{Figure a):} $h_3(z)$ when the translation reaction follows the mass action kinetics (dashed blue) or the Michaelis Menten kinetics (dashed dotted green). \textit{Figure b):} comparison of the outer approximations of the reachable sets in the two cases. The blue dashed line is as in Fig \ref{fig:silico}. The grey line is the outer approximation of the reachable set of the FSP system~\eqref{eq:lin_fin}, the green dashed dotted line is the outer approximation of the original system~\eqref{eq:lin_inf}  according to Theorem \ref{krogh_inf}. }
\label{fig:sat}
\end{figure}

\subsection{Fluorescent protein and the two  inputs case }

Consider again Example \ref{ex:gene}, but  now assume  that:
\begin{enumerate}
\item
  mRNA production and degradation can both be controlled, so that the vector of propensities is
$\alpha(z)=[k_r \cdot \sigmau_1(t),\ \gamma_r \cdot m\cdot \sigmau_2(t),\ k_p\cdot m,\ \gamma_p\cdot p]^\top$ and $\sigmau(t):=\begin{ps}\sigmau_1(t)\\\sigmau_2(t)\end{ps}$;
\item the protein $P$ can mature into a fluorescent protein $F$ according to the additional maturation and degradation reactions 
\begin{align*}
P \quad & \xrightarrow{\alpha_5(k_f, z)} \quad F , \quad\quad\quad\quad F \quad  \xrightarrow{\alpha_6(\gamma_p,z)} \quad  \emptyset,
\end{align*} 
where $\alpha_5(k_f, z):=k_f \cdot p,\quad \alpha_6(\gamma_p,z):=\gamma_p \cdot f$ and $k_f>0$ is the maturation rate. For simplicity, the degradation rate of  $F$ is assumed to be the same as that of $P$;
\item the fluorescence intensity $I(t)$ of each cell can be measured and is proportional to the amount of fluorescence protein, that is, $I(t)=rF(t)$ for a fixed scaling parameter $r>0$.
\end{enumerate}
Since all the propensities are affine, the system~describing the evolution of means and variances of the augmented network is 
\bev
\dot x_{\le 2}(t)=A^f({\sigmau(t)}) x_{\le 2}(t)+b^f({\sigmau(t)}),
\label{fluo}
\eev
where the state vector $ x_{\le 2}(t)$ and $A^f({\sigmau(t)}) ,b^f({\sigmau(t)})$ are 
 { \small \begin{align*}
 & x_{\le 2}\!\!=\left[\mathbb{E}[M] , \mathbb{E}[P] , \mathbb{E}[F]  , \Cov\![M,P], \Cov\![M,F] , \Var[P] , \Cov\![P,F] , \Var[F]\right]^\top\\[0.1cm]
&A^f\!\!=\!\!\left[ \arraycolsep=2.5pt\begin{matrix} d_1( \sigmau_2(t))  & 0 & 0 & 0 & 0 & 0 & 0 &0 \cr
 k_p&  d_2 &0 & 0& 0 &0 &0& 0 \cr
    0 &\gamma_p &  d_3&0& 0& 0& 0& 0 \cr
    k_p &0& 0 & d_4( \sigmau_2(t)) &0 &0& 0 &0 \cr
    0& 0& 0& \gamma_p&  d_5( \sigmau_2(t))  &0 &0 &0 \cr
    k_p&(\gamma_p+k_f)& 0 &2k_p &0 & d_6 &0& 0 \cr
    0 &-\gamma_p& 0& 0& k_p& \gamma_p& d_7&0 \cr
    0 &\gamma_p &k_f& 0 &0& 0& 2\gamma_p&  d_8
    \end{matrix}\right], \\
&b^f\!= \left[\arraycolsep=1.8pt\begin{matrix}
k_r \sigmau_1(t)&  \hspace{0.9cm} 0 &  \hspace{0.7cm} 0 & \hspace{0.7cm}0 & \hspace{1.15cm}0 & \hspace{0.7cm}0 & \hspace{0.3cm}0 & \hspace{0.2cm}0  \end{matrix}\hspace{0.2cm} \right]^\top\\[0.1cm] 
 &{\normalsize\mbox{with }} d_1( \sigmau_2(t)) =-\gr  \sigmau_2(t),\ d_2= -(\gamma_p+k_f),\ d_3=-k_f , \\& d_4( \sigmau_2(t)) =-(\gamma_r \sigmau_2(t)+\gamma_p+k_f), d_5( \sigmau_2(t)) =-(\gamma_r \sigmau_2(t)+k_f),\\&d_6=-2(\gamma_p+k_f),\ d_7= -(2k_f+\gamma_p),\ d_8=-2k_f.
\end{align*}}
System \eqref{fluo} depends on the parameter vector $\theta=[k_r, \gamma_r, k_p, \gamma_p, k_f,r]$ (for more details see \cite[Supplementary Information pg. 16]{Ruess2013}).  For the parameters we use the MAP estimates identified in \cite{parise2015reachable} (all in min$^{-1}$)
\begin{equation}
 \begin{array}{rl}
k_r&=0.0236 \quad   \gamma_r= 0.0503 \quad  k_p= 178.398  \cr
  k_f&= 0.0212 \quad \gamma_p=  0.0121\quad r^{-1}=646.86
  \end{array}
\label{param}
\end{equation}
and we  set
\begin{align}\label{out_fluo}
L^f&:=\left[\begin{array}{ccccccccc}0 & 0 & r   & 0 & 0 & 0 & 0&0\\ 0 & 0 &  0 & 0 & 0 & 0 & 0&r^2 \ \end{array}\right],
\end{align}
to compute the mean and variance reachable set for the fluorescence intensity. 

Our first aim is to compare the reachable set of such extended model with experimental data, when only one external signal (``1in") is available. In the case of one input, \eqref{fluo} is a linear system~and the methods of Section \ref{sec:r_affine_af} can be applied. Fig. \ref{fig:all}a)  shows the estimated reachable set compared with the real data collected in \cite{parise2015guides}.

\begin{figure}

\quad \textbf{a)}
\vspace{-0.6cm}
\begin{center}
\includegraphics[width=0.42\textwidth]{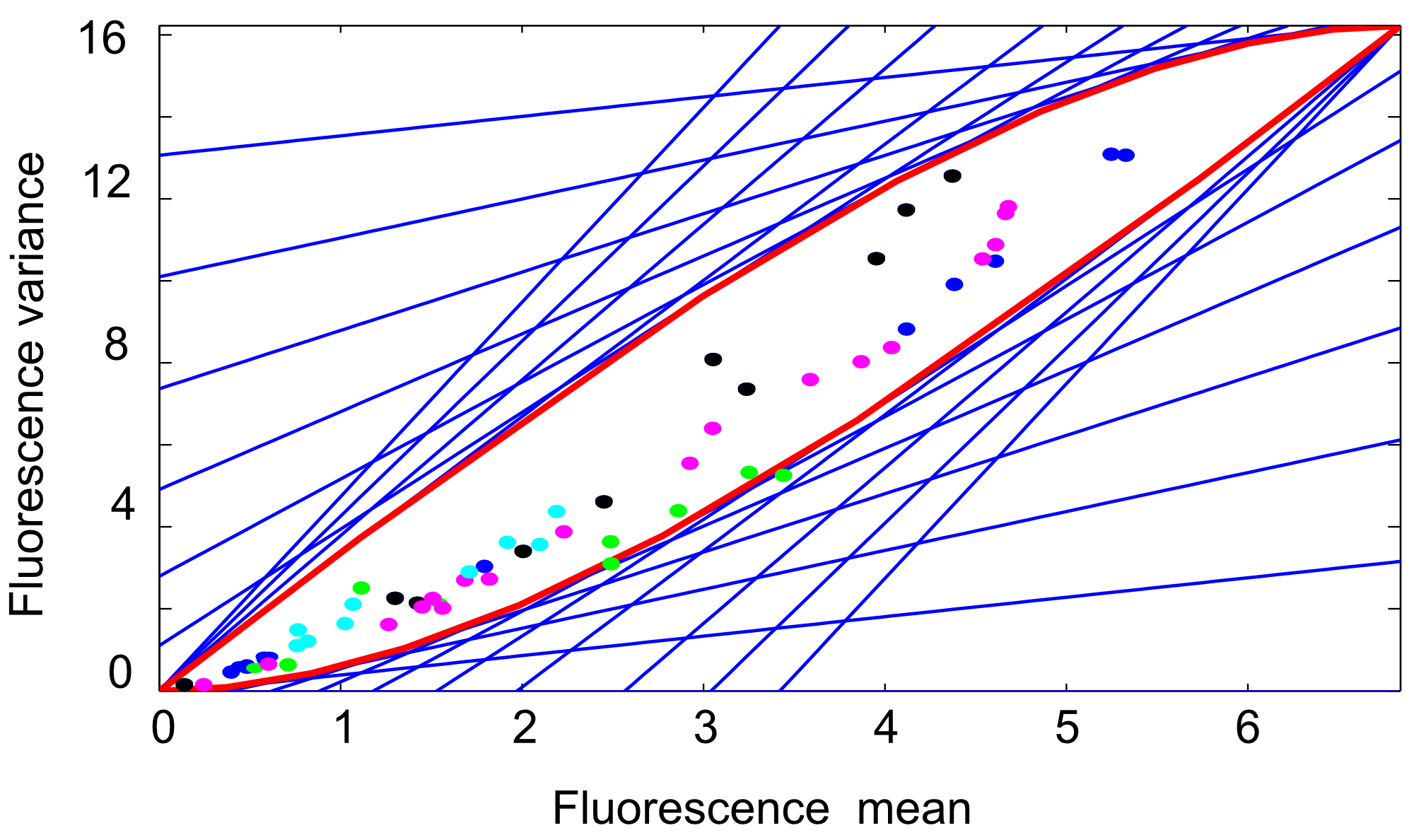}
\end{center}
\quad\textbf{b)}
\vspace{-0.7cm}
\begin{center}
\includegraphics[width=0.42\textwidth]{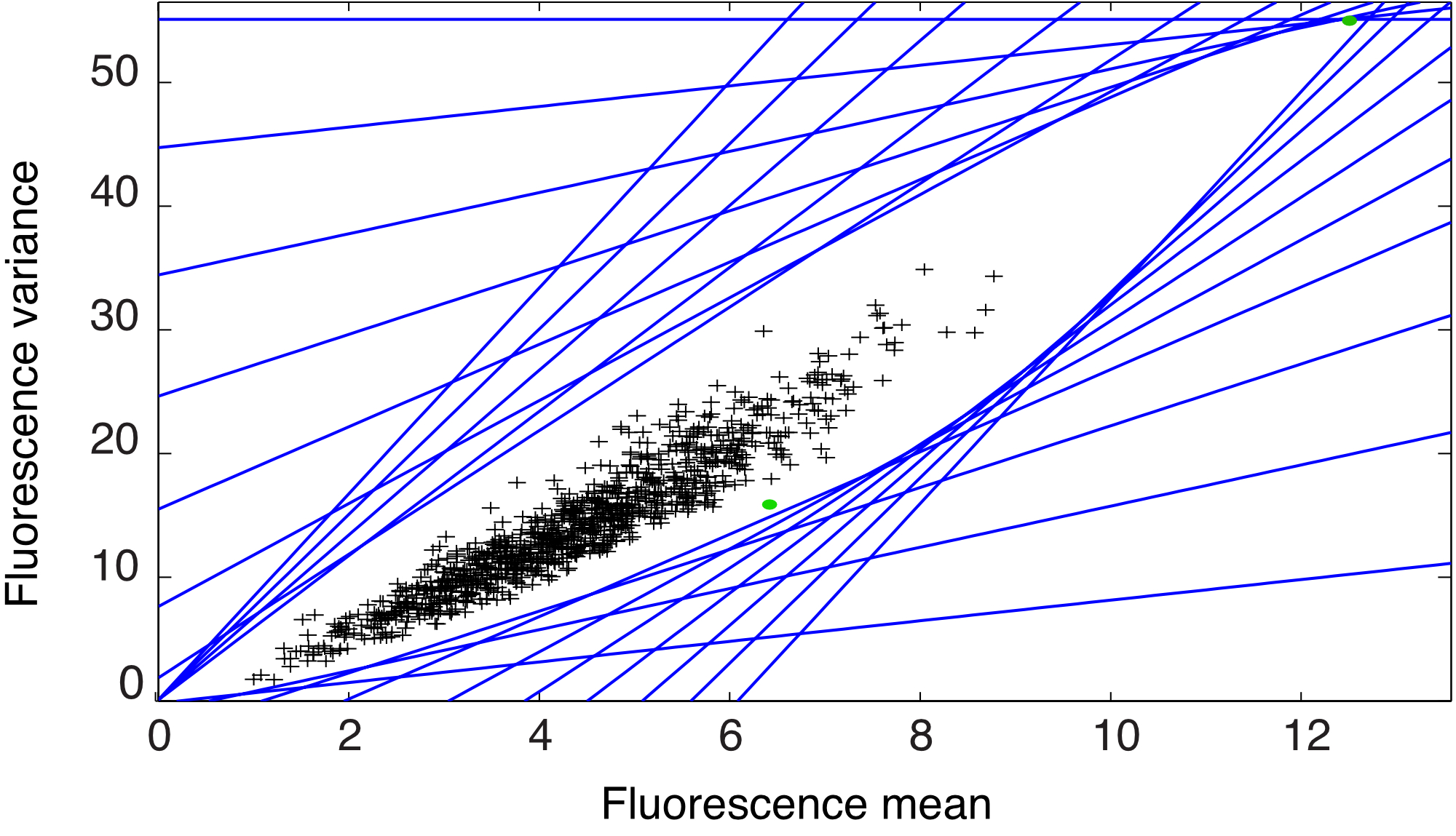}
\end{center}
\quad\textbf{c)}
\vspace{-0.8cm}
\begin{center}
\includegraphics[width=0.43\textwidth]{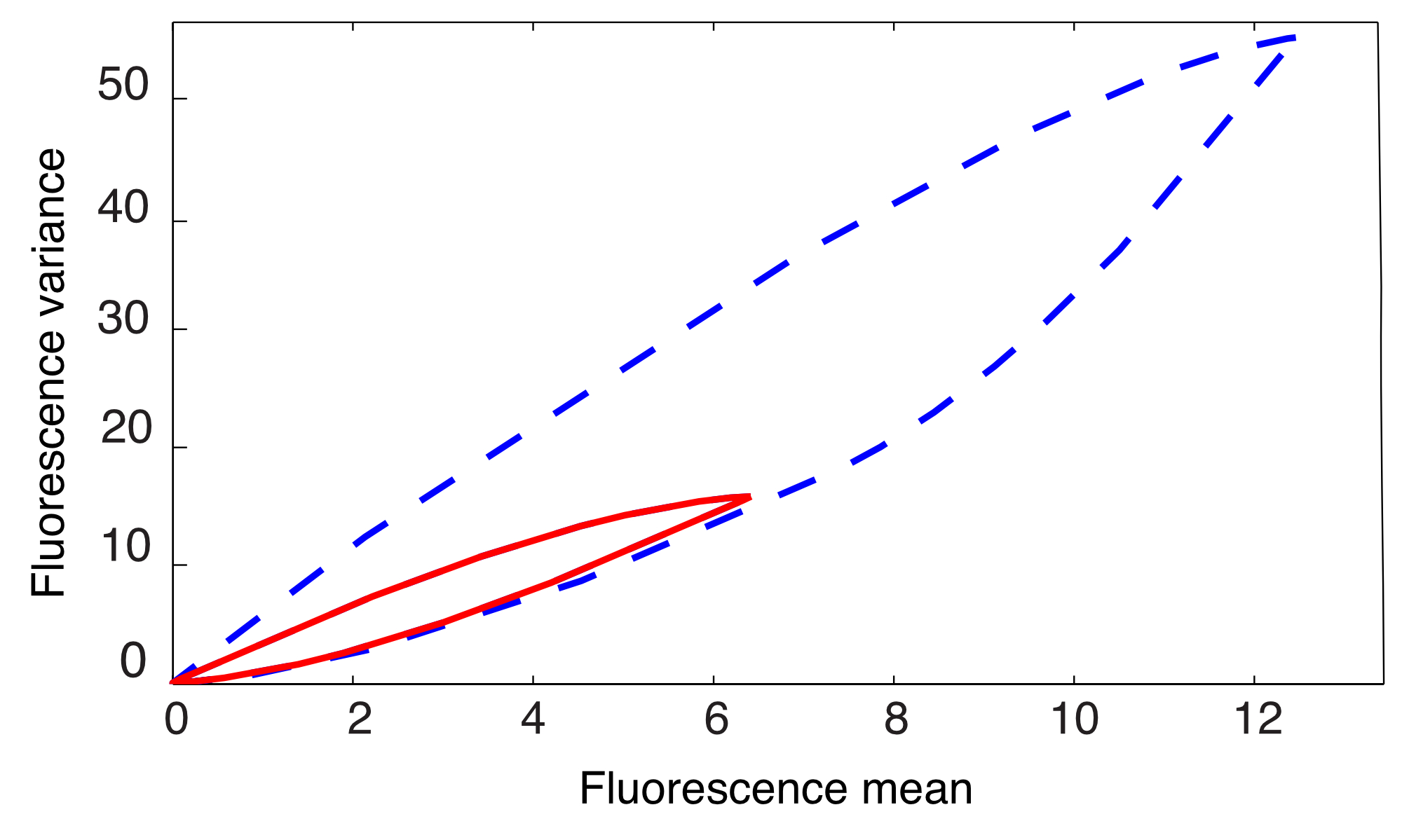}
\end{center}
\vspace{-0.5cm} 
\caption{ \scriptsize Output reachable set of system~\protect\eqref{fluo} with output as in \protect\eqref{out_fluo} and parameters as in \protect\eqref{param}. \textit{Figure a) [$1$ external signal]:} Comparison between the inner (red contour) and outer (blue lines) approximation of the output reachable set, when the set of possible modes is $\Sigma^{\textup{1in}}$, and the measured data. Different colors refers to data collected in different experiments. \textit{Figure b) [$2$ external signals]:}  Outer approximation of the output reachable set, when the set of possible modes is $\Sigma^{\textup{2in}}$. The two green dots represent the outputs when $\sigmau(t)=[1, 0.5]^\top\, \forall t$ and $\sigmau(t)=[1, 1]^\top\, \forall t$, respectively. The black crosses  represent the output for random signals in $\Sigma^{\textup{2in}}$.
\textit{Figure c) [Comparison]:} The red solid line  is the outer approximation obtained for $\sigmau(t) \in \Sigma^{\textup{1in}}$, as in Fig.~a), the blue dashed line the one for  $\sigmau(t)\in \Sigma^{\textup{2in}}$, as in Fig. b).}
\label{fig:all}
\end{figure}

Our second goal is to investigate how the reachable set changes when  both mRNA production and degradation are controlled (``2in"), as studied in \cite{briat2012computer}. Note that in this case, system~\eqref{fluo} is nonlinear.  We therefore set $T=300$  min and  assume that switchings can occur every $20$ min, so that Assumption~\ref{ass:input_switch} is satisfied with $K=15$ and use the hyperplane method as described in Section \ref{sec:r_affine_sw} with input  sets
\begin{eqnarray*}
\Sigma^{\textup{2in}}\!\!\!\!&:=& \!\!\!\! \left\{\left[\begin{array}{c}0 \\1 \end{array}\right], \left[\begin{array}{c}0 \\0.5 \end{array}\right], \left[\begin{array}{c}1 \\1 \end{array}\right],\left[\begin{array}{c}1 \\0.5 \end{array}\right]\right\}\!\!, \mbox{ so that } I=4,\\
\Sigma^{\textup{1in}}\!\!\!\!&:=&\!\!\!\!\left\{\left[\begin{array}{c}0 \\1 \end{array}\right], \left[\begin{array}{c}1 \\1 \end{array}\right] \right\}, \mbox{ so that } I=2,
\end{eqnarray*}
respectively. Note that we set the minimum input for the mRNA degradation to $0.5>0$ to avoid unboundedness. With these input choices it is intuitive that the largest possible state is reached when the mRNA production is at its maximum and the mRNA degradation is at its minimum. Therefore, in the {MILP}s we can use the bounds ${\bf M}=x\left(T;0, \sigmau(t)=\begin{ps}1\\0.5\end{ps}\, \forall t\right)$ for the case of two inputs and ${\bf M}=x(T;0,  \sigmau(t)=\begin{ps}1\\1\end{ps}\, \forall t), $ for the case of one input. 
Fig.~\ref{fig:all}b) shows the output reachable set for the case of two inputs.
The simulation time for computing the outer approximation with the hyperplane method was $5.6$ hrs. Computing the exact reachable set by simulating all the possible switching signals, assuming that one simulation takes $10^{-4}$ sec and neglecting the time needed to enumerate all   possible signals, would take $29.8$ hrs.  The black crosses in Fig.~\ref{fig:all}b) are obtained by simulating the output of the system~for $5000$ randomly constructed input signals. This simulation illustrates that random approaches might lead to significantly under estimate the  reachable set.
Fig.~\ref{fig:all}c) shows a comparison of the reachable sets obtained in Fig.~\ref{fig:all}a) and b) when the input set is $\Sigma^{\textup{1in}}$ and $\Sigma^{\textup{2in}}$, respectively.

\section{Conclusion}

In the paper we have: i) proposed a method to approximate the projected reachable set of switched affine 
systems with fixed switching times,   ii) extended the FSP approach  to controllable  networks, iii)
illustrated how these new theoretical tools can be used to analyse generic   networks both from a population and single cell perspective and iv) provided an extensive gene expression case study using both in silico and in vivo data.
Even though our analysis is motivated by biochemical reaction networks, 
 our results  can actually be applied to study the  moments of any Markov chain with  transitions rates that switch among $I$ possible configurations at $K$ fixed instants of times. Our results hold both in case of finite and infinite state space.
Moreover, while
 we have assumed here that cells are identical, we showed in \cite{parise2016reachability} that also in the case of heterogeneous population one can derive equations describing the moments evolution. The reachable set of such populations can  be obtained, as described in this paper, by applying Corollary~\ref{cor:linear} or \ref{thm:hm1_sw} to such system.


\enlargethispage{-8cm}
\vspace{-5cm}
\begin{IEEEbiography}[{\includegraphics[trim=40mm 0mm 50mm 10mm,width=1.25in,height=1.25in,clip,keepaspectratio]{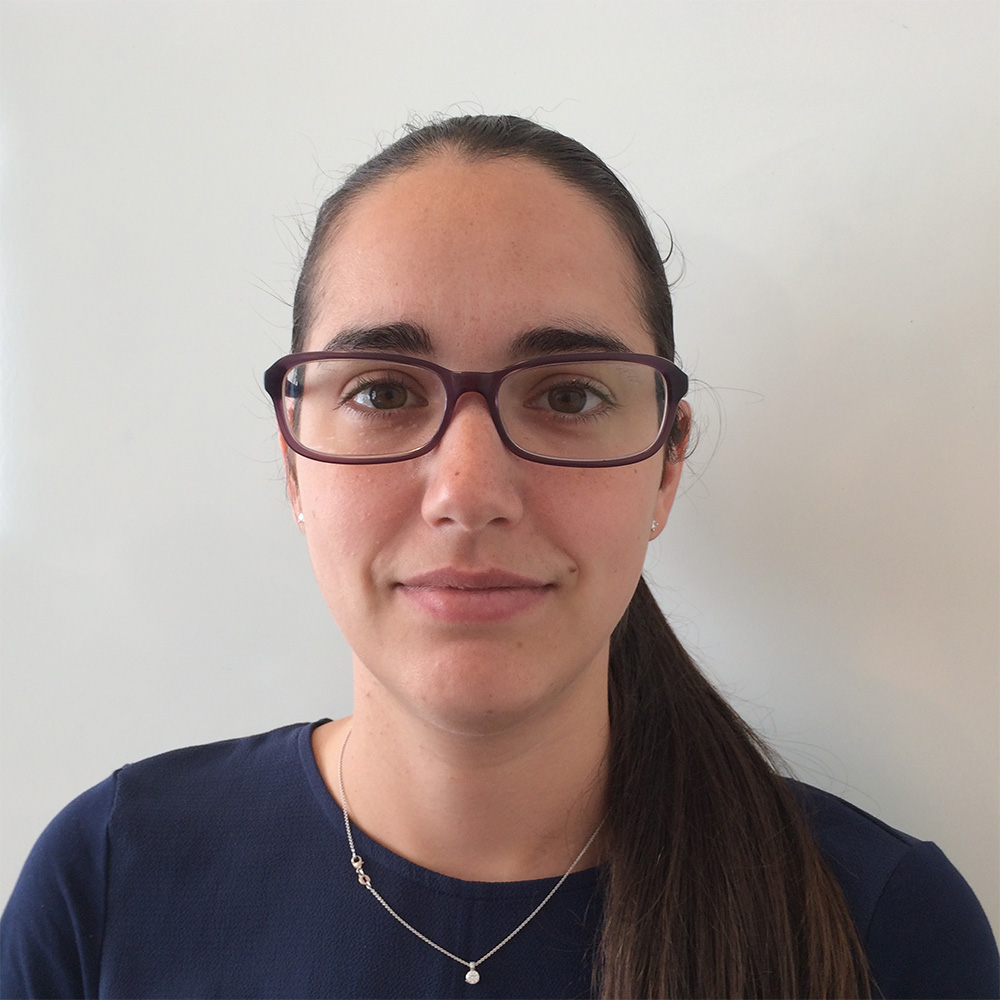}}]{Francesca Parise}
 Francesca Parise was born in Verona, Italy, in 1988. She received the B.Sc. and M.Sc. degrees (cum Laude) in Information and Automation Engineering from the University of Padova, Italy, in 2010 and 2012, respectively. She conducted her master thesis research at Imperial College London, UK, in 2012.  She graduated from the Galilean School of Excellence, University of Padova, Italy, in 2013. She defended her PhD at the Automatic Control Laboratory, ETH Zurich, Switzerland in 2016 and she is currently a Postdoctoral researcher at the Laboratory for Information and Decision Systems, M.I.T., USA.

Her research focuses on identification, analysis and control of complex systems, with application to distributed multi-agent networks and systems biology.
\end{IEEEbiography}

\vspace{-5cm}
\begin{IEEEbiography}[{\includegraphics[trim=18mm 0mm 20mm 0cm,width=1.25in,height=1.25in,clip,keepaspectratio]{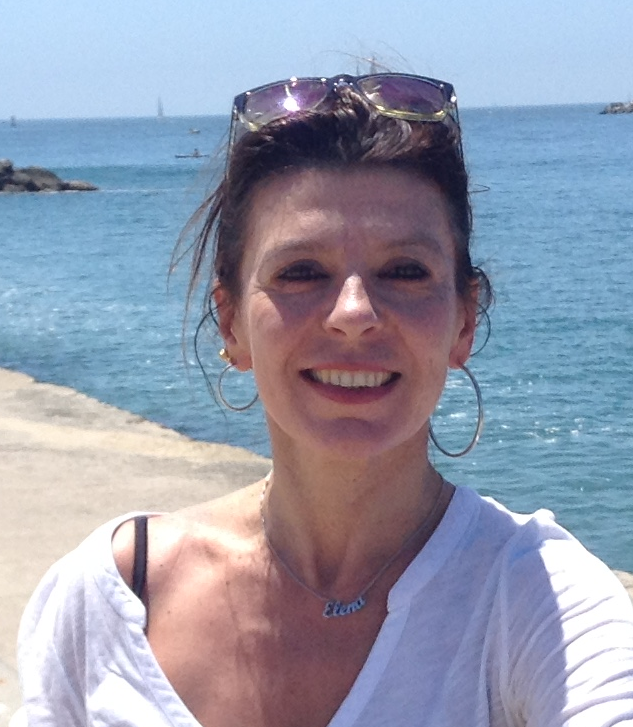}}]{Maria Elena Valcher}
 Maria Elena Valcher  received the Laurea degree and the PhD
from the University of Padova, Italy.
Since January 2005  she is full professor of Automatica at the University of Padova.

She is author/co-author of 76 papers appeared in international journals, 92 conference papers, 2 text-books and several book chapters. Her research interests include multidimensional systems theory, polynomial matrix theory, behavior theory, convolutional coding, fault detection, delay-differential systems, positive systems, positive switched systems and Boolean control networks.

She has been involved in the Organizing Committees and in the Program Committees of several conferences. In particular, she was the Program Chair of the CDC 2012. 
She was in the Editorial Board of the IEEE Transactions on Automatic Control (1999-2002), Systems and Control Letters (2004-2010) and she is currently in the Editorial Boards of Automatica (2006-today), Multidimensional Systems and Signal Processing (2004-today), SIAM J. on Control and Optimization (2012-today),  European Journal of Control (2103-today) and IEEE Access (2014-today).

She was Appointed Member of the CSS BoG (2003); Elected Member of the CSS BoG (2004-2006; 2010-2012); Vice President Member Activities of the CSS (2006-2007); Vice President Conference Activities of the CSS  (2008-2010); CSS President (2015).
She is presently a Distinguished Lecturer of the IEEE CSS and  IEEE CSS Past President. She received the 2011 IEEE CSS Distinguished Member Award and she is an IEEE Fellow since 2012.
\end{IEEEbiography}
\vspace{-2cm}
\begin{IEEEbiography}[{
\includegraphics[width=1.25in,height=1.25in,clip,keepaspectratio]{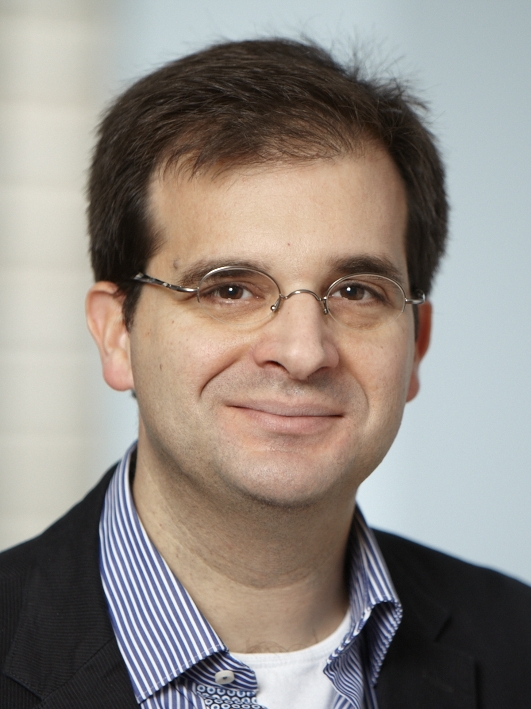}}]{John Lygeros}
  John Lygeros completed a B.Eng. degree in electrical engineering in  1990 and an M.Sc. degree in Systems Control in 1991, both at Imperial College of Science Technology and Medicine, London, U.K.. In 1996 he obtained a Ph.D. degree from the Electrical Engineering and Computer Sciences Department, University of California, Berkeley. During the period 1996-2000 he held a series of research appointments at the National Automated Highway Systems Consortium, Berkeley, the Laboratory for Computer Science, M.I.T., and the Electrical Engineering and Computer Sciences Department at U.C. Berkeley. Between 2000 and 2003 he was a University Lecturer at the Department of Engineering, University of Cambridge, U.K., and a Fellow of Churchill College. Between 2003 and 2006 he was an Assistant Professor at the Department of Electrical and Computer Engineering, University of Patras, Greece. In July 2006 he joined the Automatic Control Laboratory at ETH Zurich, first as an Associate Professor, and since January 2010 as a Full Professor. Since 2009 he is serving as the Head of the Automatic Control Laboratory and since 2015 as the Head of the Department of Information Technology and Electrical Engineering. His research interests include modelling, analysis, and control of hierarchical, hybrid, and stochastic systems, with applications to biochemical networks, automated highway systems, air traffic management, power grids and camera networks. He teaches classes in the area of systems and control both at the undergraduate and at the graduate level at ETH Zurich, notably the 4th semester class Signals and Systems II, which he delivers in a flipped classroom format. John Lygeros is a Fellow of the IEEE, and a member of the IET and the Technical Chamber of Greece. He served as an Associate Editor of the IEEE Transactions on Automatic Control and in the IEEE Control Systems Society Board of Governors; he is currently serving as the Treasurer of the International Federation of Automatic Control.
\end{IEEEbiography}

\end{document}